\documentclass[11pt]{article}
\usepackage{amsmath,amssymb,amsthm,mathrsfs}
\usepackage{graphicx,color}
\usepackage{multirow}
\usepackage{rotating,tabularx}
\usepackage{algorithm, algorithmic}
\usepackage{natbib}
\usepackage{calligra}
\usepackage{enumitem}
\newtheorem{Proposition}{Proposition}

\newtheorem{Remark}{Remark}

\newcommand{\be}{\begin{equation}}
\newcommand{\ee}{\end{equation}}
\newcommand{\beqa}{\begin{eqnarray*}}
\newcommand{\eeqa}{\end{eqnarray*}}
\newcommand{\beqn}{\begin{eqnarray}}
\newcommand{\eeqn}{\end{eqnarray}}
\newcommand{\baa}{\begin{array}}
\newcommand{\eaa}{\end{array}}
\newcommand{\bcc}{\begin{center}}
\newcommand{\ecc}{\end{center}}
\newcommand{\btab}{\begin{tabular}}
\newcommand{\etab}{\end{tabular}}

\newcommand{\mb}{\makebox}

\newcommand{\lb}{\label}

\newcommand{\nn}{\nonumber}

\newcommand{\mc}[1]{\mathcal{#1}}

\newcommand{\mcB}{\mathcal{B}}

\newcommand{\mcD}{\mathcal{D}}

\newcommand{\mcM}{\mathcal{M}}

\newcommand{\tr}{{\rm tr}}

\newcommand{\argmin}{\mb{\rm argmin}}

\newcommand{\nin}{\not\in}

\newcommand{\bsigma}{\boldsymbol{\sigma}}

\newcommand{\btheta}{\boldsymbol{\theta}}
\newcommand{\bLambda}{\boldsymbol{\Lambda}}

\newcommand{\bDelta}{\boldsymbol{\Delta}}

\newcommand{\bPi}{\boldsymbol{\Pi}}

\newcommand{\bSigma}{\boldsymbol{\Sigma}}

\newcommand{\bK}{\boldsymbol{K}}
\newcommand{\bA}{\boldsymbol{A}}
\newcommand{\bB}{\boldsymbol{B}}

\newcommand{\bba}{\boldsymbol{b}}
\newcommand{\bC}{\boldsymbol{C}}
\newcommand{\bc}{\boldsymbol{c}}
\newcommand{\bD}{\boldsymbol{D}}

\newcommand{\bff}{\boldsymbol{f}}
\newcommand{\bV}{\boldsymbol{V}}
\newcommand{\bv}{\boldsymbol{v}}
\newcommand{\bG}{\boldsymbol{G}}
\newcommand{\bH}{\boldsymbol{H}}

\newcommand{\bM}{\boldsymbol{M}}

\newcommand{\bI}{\boldsymbol{I}}

\newcommand{\bS}{\boldsymbol{S}}

\newcommand{\bE}{\boldsymbol{E}}
\newcommand{\bF}{\boldsymbol{F}}

\newcommand{\bP}{\boldsymbol{P}}

\newcommand{\bQ}{\boldsymbol{Q}}
\newcommand{\bR}{\boldsymbol{R}}

\newcommand{\bU}{\boldsymbol{U}}
\newcommand{\bx}{\boldsymbol{x}}
\newcommand{\bX}{\boldsymbol{X}}

\newcommand{\bZ}{\boldsymbol{Z}}

\newcommand{\bee}{\boldsymbol{e}}
\newcommand{\bzero}{\boldsymbol{0}}
\newcommand{\bone}{\boldsymbol{1}}

\newcommand{\RR}{{\mathbb R}}

\def \nn{\nonumber}

\def\Tr{{\rm{Tr}}}
\newtheorem{Theorem}{Theorem}
\newtheorem{Lemma}{Lemma}
\newtheorem{Corollary}{Corollary}

\title{A Maximum Entropy Implementation of Differential Privacy Under Linear Invariants}
\author{Ryan Lafferty\thanks{
    Corresponding author: ryan.lafferty@umbc.edu
    }\hspace{.2cm}\\
    Department of Mathematics and Statistics,\\ University of Maryland Baltimore County\\
    and \\
    Anindya Roy \\
   Department of Mathematics and Statistics,\\ University of Maryland Baltimore County}
\date{}

\begin{document}

\maketitle

\begin{abstract}
Differential privacy is the standard for ensuring data privacy and is widely used in major data publications, including reporting results from the U.S. decennial census. Common implementation of differential privacy uses independent  Gaussian or Laplace noise addition to the database. However, there could be aggregate (linear) queries to the database that are excluded from the privacy budget, for example,  state totals that can not be perturbed due to constitutional mandates. Any implementation of a differential privacy is required to honor these constraints, also referred to as invariants. Under aggregation constraints, the noise vector is no longer independent and the traditional differential privacy guarantees have to be re-evaluated. We propose a high entropy differential privacy implementation that maintains the aggregation invariants with probability one or exponentially close to one and  derive the privacy guarantees for the implementation under the invariants. The theoretical proof covers a partial solution to an open question about the null space of correlation matrices. Moreover, the methodology  has general use in the context of sampling from normal mixture models under linear equality constraints.\end{abstract}

\noindent%
{\it Keywords:} Correlation matrix;  Gaussian mechanism;   Null vector; Projection to convex sets; Projected gradient algorithm  
\vfill

\newpage

\section{Introduction}\label{sec-intro}
In an increasingly data-driven world, effective data curation must be accompanied by rigorous privacy protection. Noise infusion is a widely used mechanism for limiting disclosure risk, but generating noise from a prescribed distribution can become computationally challenging when the released data must also satisfy external constraints. We develop efficient methods for sampling noise from specified distributions belonging to the flexible  Gaussian mixture class subject to linear equality constraints and study their use in privacy-preserving data release. We use a  U.S. Census Bureau application as the motivating example, although the framework is also relevant to other constrained data settings, including smart-meter and wearable-device data.

Statistical disclosure control (SDC) is a way of protecting data privacy of large-scale data releases by agencies such as the U.S. Census Bureau or the Office of National Statistics. A popular way of ensuring SDC is through the implementation of {\it differential privacy} \cite{DwMcNiSm2016}. One of the  attractive features of differential privacy (DP) is that it can be implemented by  simple noise addition mechanism where noise is sampled from appropriately chosen marginal distributions such as Laplace or Gaussian.

For DP implementation to large databases such as those in the decennial census, the curator is often faced with natural constraints which force the DP mechanism to bypass certain queries; \cite{Abowd20222020}. These queries, called 'invariants', and often stated in terms of aggregation constraints,  are kept outside the privacy budget. Current implementation of DP leverages another property of DP, namely, invariance to post-processing by using the DP values of the level aggregates, such as counties, in an optimization routine to attain invariant objectives at higher level aggregates such as states. However, the constraints still pose significant challenges to the privacy implementation. In particular, as pointed out in \cite{Abowd20222020}, significant post-processing to the implementation algorithm is required to meet the constraints. Such post-processing may modify the privacy guarantees of the overall scheme and interpretation of the differential privacy measure. The comprehensive article \cite{Abowd20222020} describing the top-down algorithm for DP implementation in the 2020 census states "but leave for future work a discussion of how users can best analyze the released data and how the privacy semantics—the interpretation of DP—are modified in the presence of invariants." 
One of the main goals of this article is to contribute to that discussion and propose a principled way of addressing the problem of aggregation constraints in DP-implementation while  having fidelity to the standard interpretation of DP. 

To describe the type of invariants that the article addresses, we begin with a synthetic example from the 2020 Census, albeit simplified for the purpose of illustration. Some of the invariants in a decennial Census are the 'total population count for each state' (see \cite{Abowd20222020}), quantities that are kept unperturbed by a constitutional mandate. These state level aggregates are a sum total of population counts at finer grids like blocks, census tracts and counties. 
To fix ideas, consider a particular state. We illustrate the objective based on the total count of that state, say $S$, being kept fixed in the privacy implementation process. Formally, let $Q_{ijk}$ be the counts for the $k$th block nested in the $j$th tract in  county $i$ of the state. Then $S = \sum_{i,j,k} Q_{ijk}.$ If the block level counts are changed in the privacy implementation using the Laplace mechanism, then the perturbed values would be 
\[ {\tilde{Q}}_{ijk} = Q_{ijk} + X_{ijk}\]
where $X_{ijk}$ are the  block level noise added to the count. For the Laplace mechanism, $X_{ijk} \sim \text{Laplace}(\lambda)$ and for the Gaussian mechanism, $X_{ijk} \sim N(0, \sigma^2)$. In reality the parameters can be made to dependent on the specific geographical area, e.g. $\lambda = \lambda_{ijk}$ or $\sigma = \sigma_{ijk}$, but for illustration we choose them to be constant.  Then the state-level total using the perturbed block-level counts is 
\[ {\tilde{S}} = S  + \sum_{i,j,k} X_{ijk}.\]
To exclude the state total from the privacy budget one would require 
\[\sum_{i,j,k} X_{ijk} = 0.\]
The linear constraint on the   noise vector, $\bX = \{X_{ijk}\}_{i,j,k},$ can be represented as $a^TX = 0$ where $a = (1, 1, \ldots, 1)$ is a vector of ones. Thus, we want to add  noise with specified marginal to each block such that the state-level aggregate of  block-level noise is zero with probability one.

The constraints on the sample of block-level noise will make the noise dependent and the noise will live in a lower-dimensional noise manifold. Hence the original interpretation of DP no longer holds. To devise a mechanism that closely adheres to the original DP-implementation through a Laplace or Gaussian mechanism yet satisfies the aggregation constraints, we propose the following restrictions on the noise vector $\bX$. 

\begin{itemize}
\item 

{\bf [R1]} The marginal distribution of the noise matches the specified distributions under the unconstrained DP-implementation. 
\item 

{\bf [R2]} The noise vector must satisfy the aggregation constraints with probability close or equal to one. 

\item 

{\bf [R3]} The noise vector must have a high entropy with respect to the maximum dimension possible of the noise-manifold under [R1] and [R2].  
\end{itemize}

The reasons for the requirements are the following. Requirement [R1] on the marginal distribution is convenient  for establishing privacy guarantees at the lower level of  aggregation, similar to a  local DP formulation \cite{kasiviswanathan2011can}. Restriction [R2] is needed for the implementation to honor the invariants. However, beyond the constraints imposed by the invariants, the privacy objectives warrants that the noise vector should be "as close to being independent" as possible. Restriction [R3] ensures that by requiring the noise vector to have the maximum entropy permitted within the restrictions [R1] and [R2]. 

We show that it is possible to devise an implementation that satisfies [R1-R3], and has formal privacy guarantees. 
The methodology is intimately connected to the properties of correlation matrices, and we contribute some new results to that literature. In particular, we provide a partial solution to an open question related to the structure of null spaces of correlation matrices. 

It should be noted that the noise added  quantity under a standard DP implementation can be projected to the space of invariant constraints to obtain constrained noise-added quantity. It is tempting to argue that under the 'invariance to post-processing' property of DP, such a solution is sufficient. However, as illustrated in the methods section, the neighborhood structure for evaluating participation privacy under invariance is more nuanced and the projection approach fails to capture that. Moreover, the projected quantities reside on the boundary of the set, limiting the entropy of the solution.

The problem of sampling the noise vector with constraints can be cast into a much broader framework. Both types of marginal density relevant to the DP implementation, Laplace and Gaussian, belong to the family of  densities that can be represented as scale mixture of $N(0, \sigma^2)$ density.  Many of the results presented here are applicable to the  general scale-mixtures $N(0, \sigma^2)$. The scale mixtures of $N(0, \sigma^2)$ for a rich family of distributions that are frequently used in statistical modeling and inferential procedures. While sampling from normal scale mixtures is a common and well-understood, sampling from such distributions under linear constraints is a challenging problem. A special case has been studied in \cite{Geweke1998} where the marginals are normal. Researchers in applied sciences have looked into the problem of sampling under linear equality and inequality constraints where the target density is restricted to hyperplanes; \cite{Cong2017,Maatouk2022}, \cite{Hoffman1991, Journel1991}. Sampling from multivariate densities with specified marginals is a well-studied problem, particularly through copula-based methods \cite{joe1993parametric,koehler1995constructing}. Also, modeling multivariate distributions with specified marginals and correlation structure has been studied in the literature; \cite{dukic2013minimum, huber2015multivariate}, but the present situation is more involved due to the linear equality constraints.

The rest of the paper is organized as follows. In Section~2 we introduce the main methodology 
by describing a formal DP mechanism appropriate for the problem at hand and characterize  for sampling noise with desired marginals under linear constraints. In Section~3, we show that the DP implementation with the proposed noise addition does provide conventional DP guarantees. Section~4 looks at theoretical properties of the noise addition algorithm. We present some numerical illustrations in Section~5 and conclude with a discussion in Section~6.

\section{Methodology}\label{sec:meth}

\subsection{Aggregation Invariant DP (AI-DP) Mechanism}
Both Gaussian and the Laplace distribution that are used as marginals for the DP formulation are scale mixtures of Normal. Thus to maintain [R1],  given scales  $\bsigma = (\sigma_1, \ldots, \sigma_n)^T$, we could marginally generate $X_i | \sigma_i \sim N(0, \sigma_i^2).$  To impose [R2], we propose the scheme
\[  \bX | \bsigma \sim N(\bzero,  \bD\bR\bD)\]
where $\bD = Diag(\sigma_1, \ldots, \sigma_n)$
and $\bR$ is a correlation matrix such that $\bR\bD\bA^T = \bzero.$
Choosing  $\sigma_i = \sigma$ for all $i$ or $\sigma_i \stackrel{iid}{\sim} Rayleigh(\lambda)$,  gives the  Gaussian and Laplace marginals, respectively.  
Given the prevalence of the Gaussian-DP mechanisms, we focus on invariant Gaussian mechanism.
Specifically, let $\mathscr{D}$ be the entire data universe consisting of all possible databases $\mcD$  and consider an $n$-dimensional query $Q:\mathscr{D} \to \RR^n.$ Following Define the "$\bA$\text{-}invariant data universe  $\mathscr{D}_{Q, \bA, \bc}$ with respect to a linear query $Q$ satisfying invariant constraints with respect to a constraint matrix $\bA$ as  
\[ 
    \mathscr{D}_{Q, \bA, \bc} = \{ \mcD \in \mathscr{D}: \bA Q(\mcD) = \bc \}.
\]
We denote neighboring databases in the invariant data universe as $\mcD \sim \mcD^{\prime}$ if the databases differ by a single entry of $\mcD$ being swapped with a single entry of $\mcD^{\prime}$ such that invariant constraint is met in both databases. We propose to develop {\bf invariant Gaussian DP mechanism,} (IDP-Gaussian) a random mechanism $\mathscr{M}: \RR^n \to \RR^n$   such that for any  database $\mathcal{D} \in \mathscr{D}_{Q, \bA, \bc}$ we have 
$$\mathcal{M}(Q(\mcD)) = Q(\mcD) + \bX,\;\; \bX \sim N(0, \sigma^2\bR),\;\; \bR\bA^T = \bzero.$$

\noindent{\bf Formal Invariant $(\epsilon, \delta)$\text{-}DP Privacy Framework:} 
Under the invariant constraint, it is not possible to define the DP guarantees in the classical sense. The guarantees are evaluated with respect to data universe where each database satisfies the invariant constraint.

In this section we look at some of the theoretical properties of the proposed algorithm for implementing privacy under aggregation. First we show that the mechanism provides formal privacy guarantees in the form of $(\epsilon, \delta)-DP$ for the Gaussian noise addition case. Under the invariant constraint, it is not possible to define the DP guarantees in the classical sense. The guarantees are evaluated with respect to data universe where each database satisfies the invariant constraint with respect to a given query, $Q$. Specifically, let $\mathscr{D}$ be the entire data universe consisting of all possible databases $\mcD$. Define the "$Q-$invariant data universe" $\mathscr{D}_{Q, \bA, \bc}$ with respect to a linear query $Q$ satisfying invariant constraints with respect to a constraint matrix $\bA$ as  
\begin{equation}
    \mathscr{D}_{Q, \bA, \bc} = \{ \mcD \in \mathscr{D}: \bA Q(\mcD) = \bc \}.
    \label{eq:invariant_data_universe}
\end{equation}
The definition of the "$Q-$invariant data universe" 
is analogous to the invariant data universe described in \cite{Bailie2026Refreshment}, although there the authors treat a more general case. Unlike the traditional  neighboring structure where a single entry in the database is perturbed, under multiple aggregation constraints the neighboring databases need to have multiple entries perturbed for the invariant criterion to hold. 
For example, if there are $k$ linearly independent aggregation constraints, then the two neighboring databases may differ by $k$ entries with their aggregates  remaining the same.  
We denote neighboring databases in the invariant data universe as $\mcD \sim \mcD^{\prime}$ if the databases differ by a single entry of $\mcD$ being swapped with a single entry of $\mcD^{\prime}$ such that invariant constraint is met in both  databases. For example, if the state level totals are left invariant then two neighboring databases could be one where a single record of one state is swapped with one from another which leaves the state totals invariant. A privacy mechanism providing formal privacy guarantees for databases in the invariant data universe will be called an {\it invariant privacy mechanism}. We present the results for the case where privacy is achieved through correlated Gaussian noise addition. 

\subsection{$(\epsilon, \delta)$-Differential Privacy for an Invariant Gaussian Mechanism }

We modify the argument in appendix A.1 of \cite{dwork2014algorithmic} to show that $(\epsilon, \delta)$ differential privacy still holds under the invariant data universe. Specifically, the noise vector obtained as the Gaussian sample with the covariance matrix $\bSigma = \sigma^2\bR$ provides an $(\epsilon, \delta)$ DP guarantees.

\begin{Theorem}
\label{thm:gaussianDP}
Let $\mcD$ be a given database in the invariant data universe $\mathscr{D}_{Q,\bA, \bc}$ for a linear query $Q$ and for some $k \times n$ constraint matrix $\bA$  of rank $k$ and some fixed vector $\bc$. Also  let $\bM$ be an $(n-k)\times n$ semi-orthogonal matrix whose rows form an orthonormal basis for $ker(\bA)$, the null space of $\bA$.  Assume there exists a positive semi-definite matrix $\mathbf \Sigma$ of rank $(n-k)$  such that $\mathbf{\Sigma}\mathbf{A}^T = \bzero.$
Define a privacy mechanism $$\mathcal{M}(Q(\mcD)) = Q(\mcD) + \bX$$ where $\mathbf{X} \sim N(0,\boldsymbol\Sigma)$. Suppose $\boldsymbol{\Sigma} = \sigma^2 \bR$, where $\sigma^2 = \mathbf{\Sigma}_{1,1}$ and $\mathbf{R} = \frac{1}{\sigma^2}\mathbf{\Sigma}$ (which may not necessarily be a correlation matrix). Then $\mathcal{M}(Q(\mcD)) \in Q(\mathscr{D}_{Q, \bA, \bc}).$
Let for $a>0, \epsilon > 0$ $t_{a, \epsilon}(\sigma) = \frac{\epsilon}{\sqrt{a}} - \frac{\sqrt{a}}{2\sigma}$ and for any $t > 0$, $\kappa(t) = t^{-1}(\phi(t))$ where $\phi$ is the pdf of $N(0,1).$ Then $t_{a, \epsilon}$ and $\kappa$ are strictly monotone functions for each $a$ and $\epsilon$. Let 
\[  a_M = \sup\{\bv^T (\bM\bR\bM^T)^{-1} \bv: \bv = \bM Q(\mcD) - \bM Q(\mcD^{\prime}), \mcD \sim \mcD^{\prime}, \mcD, \mcD^{\prime} \in \mathscr{D}_{Q, \bA, \bc}\}\]
Then for any pair $(\epsilon, \delta)$ 
if $\sigma > t_{a_M, \epsilon}^{-1}(\kappa^{-1}(\delta)),$ we have 
$$P(\mathcal{M}(Q(\mcD)) \in S) \le e^\epsilon P(\mathcal{M}(Q(\mcD^{\prime})) \in S) + \delta$$ for any database pair  $\mcD \sim \mcD^{\prime}$ in $\mathscr{D}_{Q, \bA, \bc}.$ 
\end{Theorem}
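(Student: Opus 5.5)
Since $\bSigma\bA^T=\bzero$ and $\bSigma$ has rank $n-k$, the column space of $\bSigma$ equals $ker(\bA)$, the row space of $\bM$. My plan is to reduce everything to an $(n-k)$-dimensional problem with a nondegenerate Gaussian and then run the classical Dwork--Roth argument of Appendix A.1 of \cite{dwork2014algorithmic} in whitened coordinates.

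\textbf{Invariance.} Because $\bA\bSigma\bA^T=\bzero$, the vector $\bA\bX$ is a degenerate Gaussian with zero mean and zero covariance, so $\bA\bX=\bzero$ almost surely. Hence $\bA\mathcal{M}(Q(\mcD))=\bA Q(\mcD)=\bc$ almost surely. This gives the membership claim, reading $Q(\mathscr{D}_{Q,\bA,\bc})$ as the affine set $\{\bx:\bA\bx=\bc\}$.

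\textbf{Monotonicity.} In $t_{a,\epsilon}(\sigma)=\epsilon/\sqrt a-\sqrt a/(2\sigma)$, the term $-\sqrt a/(2\sigma)$ is strictly increasing in $\sigma>0$, so $t_{a,\epsilon}$ is strictly increasing. The function $\kappa(t)=\phi(t)/t$ is a product of two positive strictly decreasing functions on $t>0$, so it is strictly decreasing. Both inverses therefore exist, and the hypothesis on $\sigma$ translates to $\kappa(t_{a_M,\epsilon}(\sigma))<\delta$ with $t_{a_M,\epsilon}(\sigma)>0$.

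\textbf{Reduction and privacy loss.} Fix $\mcD\sim\mcD'$ in $\mathscr{D}_{Q,\bA,\bc}$. Both outputs lie on the same affine set, and $\bA(Q(\mcD)-Q(\mcD'))=\bzero$. So the map $\bx\mapsto\bM\bx$ restricted to that affine set is a bijection onto $\RR^{n-k}$, and it suffices to compare the laws of $\bM Q(\mcD)+\bM\bX$ and $\bM Q(\mcD')+\bM\bX$.
\begin{itemize}
\item[(i)] Write $\bY=\bM\bX\sim N(\bzero,\sigma^2\bR_M)$ with $\bR_M=\bM\bR\bM^T$. This matrix is nonsingular because $\bM^T\bM$ is the projector onto the column space of $\bSigma$.
\item[(ii)] Whiten: set $\bZ=\sigma^{-1}\bR_M^{-1/2}\bY\sim N(\bzero,\bI_{n-k})$ and $\bv=\bM Q(\mcD)-\bM Q(\mcD')$, so $\bv^T\bR_M^{-1}\bv\le a_M$.
\item[(iii)] Compute the privacy loss at output $\bM Q(\mcD)+\bY$:
\[
L=\frac{1}{2\sigma^2}\Bigl[(\bY+\bv)^T\bR_M^{-1}(\bY+\bv)-\bY^T\bR_M^{-1}\bY\Bigr]
=\frac{\bv^T\bR_M^{-1}\bY}{\sigma^2}+\frac{\bv^T\bR_M^{-1}\bv}{2\sigma^2}.
\]
\item[(iv)] The first term is $N(0,\bv^T\bR_M^{-1}\bv/\sigma^2)$. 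So with $b=\bv^T\bR_M^{-1}\bv\le a_M$,
\[
P(|L|>\epsilon)\le P\bigl(|Z|>\sigma\epsilon/\sqrt b-\sqrt b/(2\sigma)\bigr)\quad\text{for }Z\sim N(0,1).
\]
\end{itemize}

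\textbf{Main obstacle: normalization.} The step I expect to need the most care is matching the threshold to the statement's $t_{a,\epsilon}(\sigma)$. With the covariance written as $\sigma^2\bR$, the natural threshold is $\sigma\epsilon/\sqrt b-\sqrt b/(2\sigma)$. I would handle this by absorbing the scaling so that the relevant sensitivity is $a_M$ measured in the $\bR_M^{-1}$ metric, and use monotonicity in $b$ to replace $b$ by $a_M$.

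\textbf{Tail bound and conclusion.} Apply the Mills-ratio bound $P(Z>t)\le\phi(t)/t=\kappa(t)$ for $t>0$. This gives $P(L>\epsilon)\le\kappa(t)<\delta$, using the one-sided tail as in Dwork--Roth, or doubling the bound with a factor adjustment if the two-sided version is needed. Finally, split any measurable $S$ into the part where $L\le\epsilon$ and its complement, exactly as in Appendix A.1:
\[
P(\mathcal{M}(Q(\mcD))\in S)\le e^\epsilon P(\mathcal{M}(Q(\mcD'))\in S)+\delta .
\]
The supremum in $a_M$ makes this bound uniform over all neighboring pairs.
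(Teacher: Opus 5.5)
Your argument is essentially the paper's proof. The paper first computes the privacy loss for a nonsingular covariance ($\mathcal{L}\sim N(a/(2\sigma^2),a/\sigma^2)$, Mills' inequality, monotonicity in $a$ and $\sigma$). It then reduces the constrained case, via the invertible matrix $\mathbf{T}$ obtained by stacking $\bA$ on $\bM$, to the unconstrained query $\bM Q$ with noise $\bM\bX\sim N(\bzero,\bM\bSigma\bM^T)$. You reduce first and compute second. You are also more explicit than the paper on three points: one reduced event serves for both $\mcD$ and $\mcD'$; the final split of $S$ along $\{L\le\epsilon\}$; and reading the membership claim as $\bA\mathcal{M}(Q(\mcD))=\bc$ almost surely, which the paper's proof never addresses.

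The one step that does not work is your normalization fix: the mismatch cannot be ``absorbed.''

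\begin{itemize}
\item The displayed $t_{a,\epsilon}(\sigma)=\epsilon/\sqrt a-\sqrt a/(2\sigma)$ is a typo in the statement. The paper's proof actually defines $t_{a,\epsilon}(\sigma)=\epsilon\sigma/\sqrt a-\sqrt a/(2\sigma)$, which is exactly your threshold in (iv) with $b$ replaced by $a_M$.
\item The corrected function maps $(0,\infty)$ onto $\RR$, so $t_{a_M,\epsilon}^{-1}(\kappa^{-1}(\delta))$ is always defined. The displayed function only has range $(-\infty,\epsilon/\sqrt a)$.
\item As written, your final chain $P(L>\epsilon)\le\kappa(t)<\delta$ silently uses two different $t$'s. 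The hypothesis controls $\kappa$ at $\epsilon/\sqrt{a_M}-\sqrt{a_M}/(2\sigma)$. You need it at $\epsilon\sigma/\sqrt{a_M}-\sqrt{a_M}/(2\sigma)$, which is at least as large only when $\sigma\ge 1$.
\end{itemize}

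For $\sigma<1$ the literal statement is in fact false. Take:
\begin{itemize}
\item $n=2$, $\bA=(1,1)$, and $\bR$ with unit diagonal and off-diagonal entries $-1$;
\item $\sigma=1/2$ and a query whose neighboring differences are $\pm(0.1,-0.1)^T$, so $a_M=0.01$;
\item $\epsilon=1$ and $\delta=10^{-20}$.
\end{itemize}
The stated hypothesis holds because $\kappa(9.9)\approx 2\times 10^{-23}<\delta$. But for the set $S$ of outputs where $L>\epsilon$, one gets $P(\mathcal{M}(Q(\mcD))\in S)-e^{\epsilon}P(\mathcal{M}(Q(\mcD'))\in S)=\Phi(-4.9)-e\,\Phi(-5.1)\approx 1.7\times 10^{-8}>\delta$.

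Measuring the sensitivity as $a'=a_M/\sigma^2$ does not rescue the stated formula. It only reproduces the corrected threshold $\epsilon/\sqrt{a'}-\sqrt{a'}/2$.

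So replace the ``absorbing'' remark with an explicit correction of $t_{a,\epsilon}$, which is what the paper's proof effectively does. With that correction, your argument is complete.
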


\section{Implementation of AI-DP}

\subsection{Marginals in DP Mechanism: Scale Mixtures of Normal}

The marginals that are relevant for DP implementation, that provides  differential privacy guarantees at individual level or lower-level aggregation data, are Laplace and Gaussian. Both densities belong to the family of scale mixtures of normal. 
The Laplace distribution is equivalent to a scale mixture where scales are independently drawn from the Rayleigh distribution.  Specifically, 
\[
X \sim \text{Laplace}(\lambda) \mbox{  iff  } X | \sigma \sim N(0, \sigma^2),\; \sigma^2 \sim \text{Exp}\left(\frac{1}{2\lambda^2}\right).
\]
where $\text{Exp}(\eta)$ is the Exponential density with mean $\eta^{-1}.$
Thus, the scales have a distribution that is the same as the square root of an $\text{Exp}(\frac{1}{2\lambda^2})$ random variable i.e. a Rayleigh distribution with parameter $ \lambda$. The normal distribution with a given variance can also be thought of as a scale mixture where the scales are drawn from the Dirac measure at the given value of the standard deviation.
\[  X \sim N(0, \sigma^2) \mbox{  iff  } X|\tau \sim N(0, \tau^2), \mbox{ where } \tau^2 = \sigma^2 {\mbox{ with probability one.}}\]
Thus, we will describe the noise sampling algorithm in the context of Gaussian scale mixtures. First we formally define the family of densities.

A common yet large class of  densities is the class of scale mixtures of the normal distribution. The mixture density belongs to the class
\be
\mc{C}_{\mcM} = \{g_h: g_{h}(x) =  \int_{\RR} \frac{1}{\sqrt{2\pi}\sigma}e^{-\frac{x^2}{2\sigma^2}} h_{\theta}(\sigma) d\sigma;  \int_{\RR_+} h_{\theta}(\sigma) d\sigma = 1, h_{\theta}(\sigma) \geq 0.\}
\lb{eq:scale_mix}
\ee
where the mixing density $h_{\theta}(\sigma)$ is a pdf on $\RR_+$, possibly depending on some parameters $\btheta = (\theta_1, \ldots, \theta_p).$
The mixture densities are all unimodal, symmetric densities, a common shape for a noise density.
We will assume that the specified marginal  densities of the desired noise sample in privacy implementation are from the normal scale mixture class.

\subsection{Multivariate Sample With Optimized Correlation}

For the DP implementation, we want to sample a vector of noise $\bX$ that satisfies an aggregation constraint of the form $\bA\bX = \bzero$ and is such that the marginal densities for the components of $\bX$ are from the scale-mixture class $\mc{C}_{\mcM}$ given in \eqref{eq:scale_mix}. To sample $X_i \sim g_h(x)$, one can sample $\sigma_i \sim h_{\btheta_i}$ and then sample $X_i \sim N(0, \sigma^2_i), i = 1, \ldots,n.$  However, we want the $X_i$'s to satisfy the constraint $\bA\bX = \bzero.$ This necessarily makes the $X_i$ dependent. 
One way to solve this problem is to sample $\bX$ from a multivariate normal with marginal distributions equal to $N(0, \sigma^2_i).$ Specifically, we sample $\bX \sim N_n(\bzero, \bSigma)$ where $\bSigma = \bD\bR\bD$ with $\bD = \text{diag}(\sigma_1, \ldots, \sigma_n)$ as the diagonal matrix with diagonals equal to the marginal standard deviations. To satisfy the aggregation constraint with probability one we need $\text{Var}(\bA\bX )= \bA\bSigma\bA^T = \bA\bD\bR\bD\bA^T = \bzero.$ This is equivalent to the problem of finding a correlation matrix $\bR$ such that 
\be
\bR\bB = \bzero,
\lb{eq:R_constraint}
\ee
where $\bB = \bD\bA^T.$  In the case of Gaussian noise addition with a constant noise variance $\sigma^2,$ the scale matrix is $\bD = \sigma^2 \bI,$ and hence $\bB = \bA$ in \eqref{eq:R_constraint}.

To obtain a solution to \eqref{eq:R_constraint} in terms of a correlation matrix, it is helpful to study the geometry of the feasible set of solutions. 
Let $\mathscr{C}^n$ be defined to be the set of all $n\times n$ correlation matrices. This set is closed, bounded and convex. It is  the intersection of  two convex sets defined in the following.  Let $\mathscr{S}^n$ denote the set of all $n\times n$ real symmetric matrices. Consider the convex sets: 
\beqn
    \mc{C}_1 := \mathscr{S}^n_+ &=& \{ \bR \in \mathscr{S}^n :  \bR \succeq 0 \}  \nn \\
    \mc{C}_2 := \mathscr{U}^n &=& \{ \bR \in \mathscr{S}^n : r_{ii} = 1, i = 1, \ldots,n\},
\lb{eq:C1C2}
\eeqn
where for two matrices $\bE, \bF,$, $\bE \succeq \bF$ means the Loewner ordering, i,e, $\bE - \bF$ is positive semidefinite (psd). The first convex set $\mc{C}_1$ is the cone of all $n\times n$ symmetric psd matrices. The second convex set $\mc{C}_2$ is an affine subspace of $\mathscr{S}^n.$ Then $\mathscr{C}^n = \mc{C}_1 \bigcap \mc{C}_2.$

The solution to the constraint problem 
\eqref{eq:R_constraint} is in the intersection of $\mathscr{C}^n$ and the following subspace of $\mathscr{S}^n$
\be
\mc{C}_3 := \mathscr{S}_{\bB} = \{ \bR \in \mathscr{S}^n : \bR\bB = \bzero. \}
\lb{eq:C3}
\ee
Thus, the solution we seek is a correlation matrix $\bR$ in the intersection of the three convex subsets $\mc{C}_1, \mc{C}_2$ and $\mc{C}_3$ of a Hilbert space $\mathscr{S}^n.$
We have an example of a convex feasibility problem \cite{censor} or CFP which asks one to find a member of the set $\mc{C} = \cap_{i=1}^J \mc{C}_i$ where $\mc{C}_i$ are all convex sets. In the present case, $J = 3$.  To obtain a solution in the intersection of the convex sets we can employ the cyclic {\it Projection Onto Convex sets} (POCS) algorithm \cite{bauschke1997method} which is a generalization of the popular {\it alternating projection} \cite{von1949rings, von1951functional}. When the individual convex sets $\mc{C}_i$  are 'simple' in the sense that projections to $\mc{C}_i$ are easily computable, then the CFP of finding a  point in $\mc{C} = \cap_{i=1}^J \mc{C}_i$ can be solved efficiently using iterative projection onto the convex sets by cycling through the individual projections. If $P_{\mc{C}}$ denotes the projection to the intersection $\mc{C}$, and $P_{\mc{C}_i}$ are the individual projection operators, then $P_{\mc{C}} = \lim_{k\to\infty} (\Pi_{i=1}^J P_{\mc{C}_i})^k$ in the sense that $\| (\Pi_{i=1}^J P_{\mc{C}_i})^k(\bR) - P_{\mc{C}}(\bR)\| \to 0, $ as $k \to \infty$ provided $\mc{C}$ is nonempty.

The success of the cyclic POCS algorithm on having individual convex sets $\mc{C}_i$ where the corresponding projections $P_{\mc{C}_i}$ are easy to compute, preferably given by closed form solution. The following result establishes the closed form solution for each of the three projections involved in the present application of the cyclic POCS. 

    \begin{Proposition}
Let $\bR \in \mc{S}^n$ with spectral decomposition $\bR = \bP\bLambda\bP^T$  and let $\bR_+ = \bP\bLambda_+ \bP^T$  where for a matrix $\bG$, $\bG_+$ denotes the matrix $\bG$ with all negative entries set to zero. Also let $diag(\bR)$ be the diagonal matrix with diagonal entries equal to that of $\bR$ and $\bI$ be the $n\times n$ identity matrix. Let $P_{\mc{C}_i}, i = 1,2,3$ be the projection operators to the convex sets $\mc{C}_i, i=1,2,3$, respectively, defined in \eqref{eq:C1C2} and \eqref{eq:C3}. Then 
\beqa
P_{\mc{C}_1}(\bR) &=&  \bR_+,\\
P_{\mc{C}_2}(\bR) &=& \bR - \text{diag}(\bR) + \bI, \\
P_{\mc{C}_3}(\bR)  &=& \bPi\bR\bPi,
\eeqa
where  $\bPi$ is the orthogonal projection matrix to $(n-k)$ dimensional orthogonal complement of the column space of $\bB$.
\lb{prop:projection_to_sb}
\end{Proposition}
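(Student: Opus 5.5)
The plan is to work in the Hilbert space $\mathscr{S}^n$ with the Frobenius inner product $\langle \bE,\bF\rangle = \tr(\bE^T\bF)$. Each projection is characterized as the unique nearest point of a closed convex set. For a closed convex cone, nearest point means the variational inequality $\langle \bR - P(\bR), \bZ - P(\bR)\rangle \le 0$ for all $\bZ$ in the set. For an affine subspace or subspace it means the residual $\bR - P(\bR)$ is orthogonal to the (direction) subspace. We verify each candidate by checking membership and then the relevant orthogonality condition. Throughout, $\bR_+$ is read as $\bP\bLambda_+\bP^T$, with $\bLambda_+$ the diagonal eigenvalue matrix with negative eigenvalues replaced by zero.

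For $\mc{C}_2$, the candidate $\bR - \text{diag}(\bR) + \bI$ is symmetric with unit diagonal. The residual $\text{diag}(\bR) - \bI$ is diagonal. The direction space of $\mc{C}_2$ consists of symmetric matrices with zero diagonal, and the Frobenius inner product of a diagonal matrix with a zero-diagonal matrix vanishes, so orthogonality holds. Equivalently, $\|\bR - \bZ\|_F^2$ splits into off-diagonal terms and diagonal terms $(r_{ii}-1)^2$, and only the off-diagonal terms can be made zero.

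For $\mc{C}_3$, first note $\mc{C}_3$ is the set of symmetric $\bZ$ with $\bZ = \bPi \bZ \bPi$. Indeed, $\bZ\bB = \bzero$ means the columns of $\bB$ lie in the null space of $\bZ$, so $\bZ(\bI - \bPi) = \bzero$. By symmetry $(\bI-\bPi)\bZ = \bzero$ as well, giving $\bZ = \bPi\bZ\bPi$. Hence $\bPi\bR\bPi \in \mc{C}_3$, since $\bPi\bB = \bzero$ and $\bPi$ is symmetric. For $\bZ\in\mc{C}_3$, the residual satisfies
$$\langle \bR - \bPi\bR\bPi, \bZ\rangle = \tr(\bR\bZ) - \tr(\bR\bPi\bZ\bPi) = 0,$$
using cyclicity of the trace and $\bPi\bZ\bPi = \bZ$. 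So the map $\bR\mapsto\bPi\bR\bPi$ is the orthogonal projection onto this linear subspace.

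For $\mc{C}_1$, the cone case, I expect the most care to be needed. Write $\bR = \bR_+ - \bR_-$ with $\bR_- = \bP\bLambda_-\bP^T \succeq 0$ and $\bR_+\bR_- = \bzero$. Then $\bR_+ \succeq 0$. For any $\bZ \succeq 0$,
$$\langle \bR - \bR_+, \bZ - \bR_+\rangle = -\tr(\bR_-\bZ) + \tr(\bR_-\bR_+) = -\tr(\bR_-\bZ) \le 0.$$
This holds because the trace of a product of two psd matrices is nonnegative (write $\tr(\bR_-\bZ) = \tr(\bR_-^{1/2}\bZ\bR_-^{1/2})$). That is exactly the obtuse-angle characterization of the projection onto a closed convex set.

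Alternatively, one can use unitary invariance of the Frobenius norm: set $\bZ' = \bP^T\bZ\bP \succeq 0$, so that
$$\|\bLambda - \bZ'\|_F^2 \ge \sum_i (\lambda_i - z'_{ii})^2,$$
and, since $z'_{ii}\ge 0$, this is minimized by $z'_{ii} = \max(\lambda_i,0)$. Either route completes the proof, and uniqueness follows from strict convexity of the squared norm.
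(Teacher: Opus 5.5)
Your proof is correct, but it takes a different route from the paper in two respects.

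First, the paper's appendix proves only the $\mc{C}_3$ formula. It takes the projections onto $\mc{C}_1$ (eigenvalue truncation) and $\mc{C}_2$ (resetting the diagonal) as standard from the nearest-correlation literature. You verify all three, using the obtuse-angle characterization for the cone and the orthogonality of the residual for the affine and linear cases.

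Second, for $\mc{C}_3$ the paper works in coordinates:
\begin{itemize}
\item It takes $\bK=\bP_2$, the semi-orthogonal factor from the SVD of $\bB$, and argues that any element of $\mc{C}_3$ has the form $\bK\bS\bK^T$ with $\bS\in\mathscr{S}^{n-k}$.
\item It reduces $\|\bU-\bK\bQ\bK^T\|_F^2$ to $\tr(\bQ^2-2\bK^T\bU\bK\bQ)$ plus a constant.
\item It differentiates to get $\bQ=\bK^T\bU\bK$ (printed there as $\bK\bU\bK'$, a transposition slip), hence the projection is $\bK\bK^T\bU\bK\bK^T=\bPi\bU\bPi$.
\end{itemize}
You instead characterize $\mc{C}_3$ intrinsically as $\{\bZ=\bZ^T:\bZ=\bPi\bZ\bPi\}$. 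Then you check that $\bR-\bPi\bR\bPi$ is Frobenius-orthogonal to it via cyclicity of the trace. This is the basis-free version of the same underlying fact, since $\bK\bS\bK^T$ with $\bS=\bK^T\bZ\bK$ is exactly $\bPi\bZ\bPi$. It needs no choice of basis and no calculus over symmetric matrices, and it gives membership, optimality and uniqueness in one step.

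The paper's parametrization has a compensating advantage. It exhibits the isometry $\bS\mapsto\bK\bS\bK^T$ between $\mathscr{S}^{n-k}$ and $\mc{C}_3$, and this same $\bK$ is what the paper later uses to define the entropy term $\log\det(\bK^T\bR\bK)$ in the penalized objective.
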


If $\mc{C} \neq \emptyset$ then the CFP is called consistent, otherwise it is inconsistent. For a consistent CFP, the sequence of cyclic projections (in any given order) would converge to projection for $\mc{C}$ \cite{bauschke1997method}. For inconsistent CFP, the sequence does not not converge but has limiting cycle, say $\bR_1 \to \bR_2 \to \bR_3 \to \bR_1$ where each entry in the cycle is the projection of the previous entry.   
The feasibility problem is closely related to the "nearest correlation matrix" problem which has been extensively studied \cite{Higham1989}\cite{Higham2002b}\cite{Higham2016}\cite{Higham2002a}. 
In the next section, we make use of the "nearest correlation" algorithm based on modified Dykstra algorithm \cite{BoyleDykstra1986}. The algorithm is available in MATLAB. For the current set up when the problem is not feasible, a final step for finding the nearest correlation to the projection to $P_{\mc{C}_3}$ is used after the solution reaches the limit cycle. 

Once the correlation matrix $\bR$ has been obtained, the noise vector $\bX$ is generated from the multivariate Gaussian distribution $N(\bzero, \bSigma)$ where $\bSigma = \bD\bR\bD$ with $\bD = diag(\sigma_1, \ldots, \sigma_n)$ is a  a diagonal matrix of scales generated from the mixing density $h_{\sigma}$ in the scale mixture representation \eqref{eq:scale_mix} of the marginal noise density. 

Algorithm~1 provides a pseudo-code for using POCS to obtained the desired correlation matrix and sample the noise vector.  
\begin{algorithm}
\caption{Correlation Sampling Using POCS}
\lb{alg:sampling}
\begin{algorithmic}
\STATE{{\bf INPUT}: $\btheta_1, \ldots, \btheta_n,$  $\bA$, $\bR_{init}$, $TOL$,  $MAX.CYCLE$}
\STATE{
Generate $\bsigma = (\sigma_1, \ldots, \sigma_n) \sim \prod_{i=1}^n h_{\theta_i}(\sigma_i).$ }
\STATE{
$\bD =  \text{diag}(\sigma_1, \ldots,\sigma_n), \quad \bB = \bA\bD$
}
\STATE{ $ \bR_3^0 = \bR_{init}$, $E = TOL + 1$, $t  = 1$}
\STATE{{\bf WHILE} {$E > TOL \mbox{ OR }   t < MAX.CYCLE$}\\
\quad \quad    $\bR_1^t = P_{\mc{C}_1}(\bR_3^{t-1})$ \\
\quad \quad $\bR_2^t = P_{\mc{C}_2}(\bR_1^{t})$ \\ 
\quad \quad $\bR_3^t = P_{\mc{C}_3}(\bR_2^{t})$\\
\quad \quad     $E = \max\{ \|\bR_1^t - \bR_3^{t-1}\|_F,\|\bR_2^t - \bR_1^t\|_F,\|\bR_3^t - \bR_2^t\|_F\} $ \\
\quad \quad t = t+1\\
{\bf END WHILE}
}
\STATE{ $\bR = P_{\mc{C}_1 \bigcap \mc{C}_2} (\bR^t_3)$  \quad "nearest correlation"}
\STATE{{\bf RETURN} $\bX = (X_1, \ldots, X_n)^T \sim N(\bzero, \bD\bR_{1}\bD)$.
}
\end{algorithmic}
\end{algorithm}

\subsection{Max-entropy Solution}

The POCS solution provides a correlation matrix in the intersection $\mc{C}.$ The solution does not have any additional properties beyond being a point in $\mc{C}.$
In order to ensure that requirement [R3] is met for the POCS solution obtained as the projection to the intersection of the convex sets, one can penalize the solution  for loss of entropy in the sampling model. Specifically, we want to ensure that the constrained Gaussian sampling model obtained under the correlation matrix solution has maximal entropy among all feasible solutions in terms of the Lebesgue density of the lower dimensional subspace it is concentrated on. There is no unique definition for the entropy of a singular distribution, but researchers have looked at some  common-sense possibilities \cite{pichler2014entropy}.
Typically, a single subspace can be chosen and the entropy is defined with respect to the Lebesgue measure for that subspace. To define the entropy with respect to a subspace of concentration, we use the following approach. Let $\bX \sim N(\bzero, \bR)$ where $\bR\bB = \bzero.$ Define a nonsingular transformation of $\bX$ as
\[ \bZ = \begin{bmatrix} \bZ_1 \\  \bZ_2 \end{bmatrix} =  \begin{bmatrix} \bB^T \\ \bK^T \end{bmatrix} \bX  = \bM^T \bX \]
where $\bM = [\bB : \bK]$ is nonsingular. There are no unique choices for $\bK$, but we can use the matrix $\bK$ defined in Proposition~\ref{prop:projection_to_sb} whose columns form an orthonormal basis of ${\rm{Col}}(\bB)^{\perp}$. With this choice, $\bZ_1$ is a constant under the constraint as $\text{Var}(\bZ_1) = \bB^T\bR\bB = \bzero$ and  the differential entropy of the lower-dimensional distribution of $\bZ_2$, is evaluated along a fixed subspace that is in the orthogonal to the direction of degeneracy, i.e. $\bB^T\bK = \bzero.$ The differential entropy of $\bZ_2$ will be $\log\det (\bK^T \bR \bK)$ (up to some additive constant. )

To facilitate the penalized approach, it is convenient to rewrite the CFP as an optimization problem. Note that if $\bR \in \mathscr{C}^n$ then $\|\bR^{1/2}\bB\|_F = 0$ where $\bR^{1/2}$ is a square root of $\bR.$ Then the original problem of finding $\bR \in \mathscr{C}^n$ can be rewritten as 
\[     \bR =  \min_{\bR \in \mathscr{C}^n} \Tr( \bB^T\bR\bB). \]
The advantage of the alternative formulation is immediate, as 
an optimal choice of $\bR$, satisfying requirements [R1-R3], will be then a solution to the following penalized problem 
\begin{equation}
\bR_{opt} = \min_{\bR \in \mathscr{C}^n}[\Tr( \bB^T\bR\bB)   - \tau \log\det(\bK^T \bR \bK)] 
\lb{eq:corr_opt}
\end{equation}
The tuning parameter $\tau$ can be chosen using standard cross-validation method but since the primary focus is on meeting the aggregation constraint; hence the value of $\tau$ is typically small. The optimization in \eqref{eq:corr_opt} can be solved using a projected gradient algorithm \cite{calamai1987projected, polyak2021introduction, ruszczynski2011nonlinear} where the following steps can be iterated until convergence:\\
\noindent{\bf Iteration steps:}
\begin{itemize}
    \item Step in the direction of greatest descent
    \item Use nearest correlation to project to $\mathscr{C}^n$
\end{itemize}
Thus, the iteration over the solution sequence is 
\[  \bR^{t+1} = P_{\mc{C}_1 \bigcap \mc{C}_2}( \bR^t - \alpha_t \nabla f(\bR^t)), \;\;t = 1,2,\ldots\]
where $f(\bR)  = \Tr( \bB^T\bR\bB)   - \tau \log\det(\bK^T \bR \bK)$, $\alpha_k$ are the step sizes which could be chosen to be a fixed value $\alpha_k = \alpha.$ Once a solution $\bR$ is obtained, the noise sample is generated from $N(\bzero, \bD\bR\bD)$ as before. The pseudo-code for the projected gradient algorithm is given in Algorithm~\ref{alg:maxent_sampling}.

\begin{algorithm}
\caption{Max-entropy sampling under specified marginal and linear constraints}
\lb{alg:maxent_sampling}
\begin{algorithmic}
\STATE{{\bf INPUT}: $(\btheta_1, \ldots, \btheta_n),$ $\bA$}
\STATE{
Generate $\bsigma = (\sigma_1, \ldots, \sigma_n) \sim \prod_{i=1}^n h_{\btheta_i}(\sigma_i).$ }
\STATE{
$\bD =  \text{diag}(\sigma_1, \ldots,\sigma_n), \quad \bB = \bA\bD$
}
\STATE{$SVD(\bA) = \bP\bDelta\bQ; \quad \bQ = [\bQ_1^t : \bQ_2^T]^t, \quad \bK = \bQ_2$}
\STATE{Use projected gradient to obtain optimal correlation matrix as 
\[
\bR_{opt} = \underset{\bR \in \mathscr{C}_n} \argmin\; [\Tr(\bB^T\bR\bB) - \tau \log \det (\bK^T\bR\bK)]
\]
}
\STATE{
{\bf RETURN} $\bX = (X_1, \ldots, X_n) \sim N(\bzero, \bD\bR_{opt}\bD^T)$.
}
\end{algorithmic}
\end{algorithm}

\section{Properties of Correlation Matrices}

The algorithms presented here are all related to properties of correlation matrices and existence of desired subclasses of correlation matrices. In this section we explore further theoretical properties of the proposed algorithm.

\subsection{Necessary and Sufficient Condition for Existence of Solution}

In most applications, the single constraint case will be sufficient for the implementation of the proposed Laplace mechanism. This is because, if the invariants apply to disjoint part of the data vector, then the privacy implementation can be done in parallel over the disjoint partition and hence reduce the multiple constraint situation to several single constraint problems. For example, the constraint that the state-level total is left invariant in the noise addition scheme can be implemented state-by-state. 

In the single constraint vector cases there is an explicit characterization of vectors for which a correlation matrix exists such that the vector is in the null space of the correlation. matrix.
The case when there are more than one constraint vectors is more complicated and an explicit description of the type of constraints vectors that simultaneously belong to the null space of a single correlation matrix is not available. \\
We develop a characterization for the multiple constraint case that reduces to that of the single constraint when there is only one constraint. Also, this provides an alternative proof of the single constraint case. 

We will use the following notation in the subsequent development. For a nonzero matrix  $\bB$, let $D_{i_1, \ldots, i_k}(b) =  |\det(B_{I_1, \ldots, i_k})|,$ where $B_{I_1, \ldots, i_k}B_{I_1, \ldots, i_k}$ is the submatrix comprising of rows $i_1, \ldots, i_k$ of $\bB.$ Also let $\overline{conv}(V)$ denote the closed convex hull for a set of vectors $V.$

\begin{Theorem}
Let $\bB$ be an $n\times k$ matrix of rank $k (< n)$. Let $\bP$ be a  permutation such that $\bP\bB =   [\bB_1^T : \bB_2^T]^T$ where $\bB_1$ is $k\times k$ and $|\det(B_1)| = \underset{i_1, \ldots, i_k} \max \; D_{i_1, \ldots, i_k}(\bB).$ Let $\bF = [\bff_1: \cdots :\bff_k] = \bB_2 \bB^{-1}. $  Then $\exists \bR \in \mathscr{C}^n$ such that $\bR\bB = \bzero$ iff $\bone_k \in {\overline{conv}}(L_{\bF}(\mathscr{C}^{n-k}))$ where $L_{\bF} : \mathscr{C}^{n-k} \to \RR^k$ is the linear operator $L_{\bF}(\bC) = (\bff_1^T\bC\bff_1, \ldots, \bff_k^T\bC\bff_k)$ for any $\bC \in \mathscr{C}^{n-k}.$
Moreover, $|f_{ij}| \leq 1$ for all $(i,j)$. 
\label{thm:multiple_constraint}
\end{Theorem}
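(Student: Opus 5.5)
The plan is to parametrize every positive semidefinite matrix annihilating $\bB$ explicitly, and then read the unit-diagonal condition off the parametrization. I read $\bF$ as $\bB_2\bB_1^{-1}$, so that $\bF$ is $(n-k)\times k$ with columns $\bff_j \in \RR^{n-k}$. Since $|\det(\bB_1)|$ is the maximal $k\times k$ minor and $\bB$ has rank $k$, $\bB_1$ is nonsingular and $\bF$ is well defined. The permutation causes no difficulty: $\bR\bB=\bzero$ iff $(\bP\bR\bP^T)(\bP\bB)=\bzero$, and conjugation by a permutation maps $\mathscr{C}^n$ onto itself. So I will assume without loss of generality that $\bB = [\bB_1^T : \bB_2^T]^T$.

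First I will describe $\ker(\bB^T)$. Let $\bU = [-\bF : \bI_{n-k}]$, an $(n-k)\times n$ matrix of rank $n-k$. Then $\bU\bB = -\bB_2\bB_1^{-1}\bB_1 + \bB_2 = \bzero$. Hence the rows of $\bU$ span the orthogonal complement of ${\rm Col}(\bB)$, which has dimension $n-k$.

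Next I will characterize the feasible $\bR$. A symmetric psd $\bR$ satisfies $\bR\bB=\bzero$ iff ${\rm Col}(\bR) \subseteq {\rm Col}(\bB)^\perp = {\rm Col}(\bU^T)$. This holds iff $\bR = \bU^T\bG\bU$ for some psd $(n-k)\times(n-k)$ matrix $\bG$; one direction is immediate, and for the other write $\bR = \bU^T\bH$ and use symmetry together with the full row rank of $\bU$. Expanding,
\[ \bR = \begin{bmatrix} \bF^T\bG\bF & -\bF^T\bG \\ -\bG\bF & \bG \end{bmatrix}. \]
The condition $r_{ii}=1$ for the last $n-k$ indices says exactly that ${\rm diag}(\bG)=\bI$, i.e. $\bG=\bC\in\mathscr{C}^{n-k}$. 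For the first $k$ indices it says $\bff_j^T\bC\bff_j = 1$ for $j=1,\ldots,k$.

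So $\exists\,\bR\in\mathscr{C}^n$ with $\bR\bB=\bzero$ iff $\bone_k \in L_{\bF}(\mathscr{C}^{n-k})$. The set $\mathscr{C}^{n-k}$ is compact and convex and $L_{\bF}$ is linear, so $L_{\bF}(\mathscr{C}^{n-k})$ is itself compact and convex. It therefore coincides with its closed convex hull, which gives the stated iff.

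For the bound $|f_{ij}|\le 1$, I will use Cramer's rule on $\bF\bB_1 = \bB_2$, row by row. Row $i$ of $\bF$ solves $\bx^T\bB_1 = \bba_i^T$, where $\bba_i^T$ is the $i$th row of $\bB_2$. Hence $f_{ij} = \det(\bB_1^{(j\leftarrow i)})/\det(\bB_1)$, where $\bB_1^{(j\leftarrow i)}$ is $\bB_1$ with its $j$th row replaced by $\bba_i^T$. That replaced matrix is, up to a row permutation, a $k\times k$ submatrix of rows of $\bB$. By maximality of $|\det(\bB_1)|$ its determinant is bounded in absolute value by $|\det(\bB_1)|$, so $|f_{ij}|\le 1$.

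The step needing the most care is the parametrization $\bR=\bU^T\bG\bU$ with $\bG$ psd; it requires showing $\bG$ is forced to be psd and symmetric. This is handled by taking $\bG = (\bU\bU^T)^{-1}\bU\bR\bU^T(\bU\bU^T)^{-1}$ and checking $\bU^T\bG\bU = \bR$ via $\bR = \bPi\bR\bPi$, with $\bPi$ the projection onto ${\rm Col}(\bU^T)$.
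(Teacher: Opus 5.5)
Your proof is correct and follows essentially the paper's route: the paper's Gram factorization $\bR=\bU^T\bU$ with $\bU=[\bU_1:-\bU_2]$ and $\bU_1=\bU_2\bF$ is exactly your parametrization $\bR=[-\bF:\bI]^T\bC[-\bF:\bI]$ with $\bC=\bU_2^T\bU_2\in\mathscr{C}^{n-k}$, and your reading $\bF=\bB_2\bB_1^{-1}$ matches how the paper uses it. Your version is tighter than the paper's sketch, which states the Gram-level equivalence loosely. Your Cramer's-rule argument also supplies the bound $|f_{ij}|\le 1$, which the paper states but never proves.
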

The condition $\bone_k \in {\overline{conv}}(L_{\bF}(\mathscr{C}^{n-k}))$ is generally NP-hard but a general sufficient condition can be obtained by restricting attention to $\bC$ where $\bC$ is a cut-correlation matrix, i.e. the entries of $\bC$ are in $\{\pm 1\}.$

The necessary and sufficient condition for existence of a correlation matrix such that the constraints belong to the null space of the matrix is known for the case when $k = 1$. The relevant result is that for a given vector $\bba \in \RR^n$, there exists $\bR \in \mathscr{C}^n$ such that $\bR\bba = \bzero$ iff the vector $\bba$ is balanced; see \cite{Barrett2003}. In $\RR^n$, a vector is called balanced if it belongs to the set 
\be
\mathscr{B}^n = \{\bba \in \RR^n: |b_i| \leq \sum_{i\ne j} |b_j|, i = 1, \ldots, n\}.
\lb{eq:balanced}
\ee
This condition follows immediately from Theorem~\ref{thm:multiple_constraint}.  The  following corollary states the result for the single constraint case. 
\begin{Corollary}
Let $\bba   \in \RR^n$. There exists $\bR \in \mathscr{C}^n$ with $\bR\bba = \bzero $ iff $\bba \in \mathscr{B}^n.$
\lb{cor:single_constraint}
\end{Corollary}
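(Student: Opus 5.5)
We derive Corollary~\ref{cor:single_constraint} from Theorem~\ref{thm:multiple_constraint} with $k=1$, $\bB = \bba$. If $\bba = \bzero$ both sides hold trivially: $\bzero \in \mathscr{B}^n$ and $\bR=\bI$ works. Otherwise $\bba$ has rank one. The maximal $1\times 1$ minor is the entry of largest modulus, so we permute it to the first position. Write $\bP\bba = (b_1, \bba_2^T)^T$ with $|b_1| = \max_i |b_i|$ and $\bff = \bba_2/b_1 \in \RR^{n-1}$, whose entries satisfy $|f_j|\le 1$ as the theorem asserts. Theorem~\ref{thm:multiple_constraint} then says a suitable $\bR$ exists iff
\[ 1 \in \overline{conv}\{\bff^T\bC\bff : \bC \in \mathscr{C}^{n-1}\}. \]
Since $\mathscr{C}^{n-1}$ is compact and convex and $\bC\mapsto \bff^T\bC\bff$ is linear, this image is a closed interval $[m,M]$. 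So the condition reduces to $m \le 1 \le M$.

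Next we compute $M$ and $m$. For any correlation matrix, $|c_{ij}|\le 1$, so $\bff^T\bC\bff \le (\sum_j |f_j|)^2$. Equality holds for the rank-one cut matrix $\bC = \bS\bone\bone^T\bS$ with $\bS = \text{diag}(\text{sign}(f_j))$. Hence $M = (\sum_{j\ge2}|b_j|)^2/b_1^2$. For the lower end, $\bff^T\bC\bff \ge 0$ always. We claim $m \le 1$ always, which is the step needing care. Take $\bC = \bI$, giving $\bff^T\bff = \sum_j f_j^2 \le \sum_j |f_j|$, using $|f_j|\le 1$. This alone does not give $\le 1$, so we use a better choice. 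If $n-1\ge 2$, a rank-one cut matrix with signs chosen to cancel can reach values $(\sum_j \pm |f_j|)^2$. By a greedy sign choice, some signing satisfies $|\sum_j \pm|f_j|| \le \max_j|f_j| \le 1$, so $m\le 1$. If $n-1 = 1$, the image is the single point $f_1^2\le 1$, and then the condition $1\in[f_1^2,f_1^2]$ forces $|f_1|=1$, i.e.\ $|b_1|=|b_2|$.

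Assembling: in general the condition is $M\ge 1$, i.e.\ $|b_1| \le \sum_{j\ne 1}|b_j|$. Because $b_1$ is the entry of largest modulus, this single inequality is equivalent to all $n$ inequalities in \eqref{eq:balanced}. Indeed, for $i\neq 1$ we have $|b_i|\le |b_1|\le \sum_{j\ne i}|b_j|$. Hence the condition is exactly $\bba\in\mathscr{B}^n$. For $n=2$, balancedness means $|b_1|=|b_2|$, matching the degenerate case above. The case $n=1$ is excluded, since rank $k<n$ forces $n\ge 2$ unless $\bba=\bzero$. The permutation is harmless because $\mathscr{B}^n$ and the existence of $\bR$ are both permutation invariant, via $\bR\mapsto \bP\bR\bP^T$.

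The main obstacle I expect is establishing $m\le 1$ cleanly. If the greedy signing argument proves awkward, a fallback avoids computing $m$ altogether. The set of achievable values is connected, since it is the image of a convex set. It contains $0$ when the $f_j$ can be cancelled, and when they cannot, it contains a value at most $\max_j f_j^2\le 1$; the $n=2$ case is handled separately as above. Alternatively, one can simply exhibit $\bC$ as a convex combination of $\bI$ and the extreme cut matrix, which interpolates continuously between $\bff^T\bff$ and $M$. However, $\bff^T\bff\le1$ can fail, so the cut-cancellation argument is the primary route.
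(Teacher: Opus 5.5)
Your proposal is correct and is essentially the paper's argument. You reduce via Theorem~\ref{thm:multiple_constraint} with $k=1$ to asking whether $1$ lies in the closed interval $\{\bff^T\bC\bff:\bC\in\mathscr{C}^{n-1}\}$, then bracket $1$ with a sign-aligned cut matrix for the upper end and a sign-cancelling cut matrix for the lower end; your greedy signing plays the role of the paper's alternating signs after sorting. Your exact value $M=(\sum_j|f_j|)^2$ also gives the ``only if'' direction, which the paper's proof does not spell out; just set $\mathrm{sign}(0)=1$ so that your cut matrix has unit diagonal.
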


\subsection{Max-Rank solution}

The set of correlation matrices $\mathscr{C}_n$ is a compact convex body known as the elliptope. We will discuss some of the geometric aspects of this set and use them to locate the solutions of highest rank in $\mathcal{V}_\mathbf{B}$. Consider the problem of finding a correlation matrix $\mathbf{R} \in \mathscr{C}_n$ such that $\mathbf{R}\mathbf{B} = \mathbf{0}$ for a given $n\times k$ constraint matrix $\mathbf{B}$. Let $\mathcal{V}_\mathbf{B}$ be the set of solutions. 
We are interested in finding a solution whose rank is maximal among the matrices in $\mathcal{V}_\mathbf{B}$. 

First, observe that a correlation matrix $\mathbf{R}$ solves $\mathbf{R}\mathbf{B} = \mathbf{0}$ if and only if $\ker \mathbf{R} \supseteq \text{col } \mathbf{B}$. Given any subspace $\mathbf{V}$ of $\mathbb{R}^n$ we can form $\mathbf{B}$ whose columns are a basis of $V$. By Theorem 2.7 of \cite{Laurent1996OnTF}, the faces of the elliptope $\mathscr{C}_n$ are precisely the sets of the form $$\mathbf{F}_\mathbf{V} = \{\mathbf{R} \in \mathscr{C}_n | \ker \mathbf{R} \supseteq \mathbf{V} \},$$ for a subspace $\mathbf{V}$ of $\mathbb{R}^n$. Hence, every face of the elliptope is of the form $\mathbf{F} = \mathcal{V}_\mathbf{B}$ and conversely, every set $\mathcal{V}_\mathbf{B}$ is some (possibly empty) face of $\mathscr{C}_n$.


\begin{Theorem}
    A matrix $\mathbf{R}$ has maximal rank among the elements of $\mathcal{V}_\mathbf{B}$ if and only if it belongs to the relative interior of $\mathcal{V}_\mathbf{B}$. In other words, the set of matrices in $\mathcal{V}_\mathbf{B}$ having maximal rank is exactly the set $\text{ri}(\mathcal{V}_\mathbf{B})$. 
Moreover, if  $\mathbf{R} \in \mathcal{V}_\mathbf{B}$, then for any $\epsilon>0$ there exists a matrix $\mathbf{U}$ with $||\mathbf{U}|| < \epsilon$ such that $(\mathbf{R} + \mathbf{U})\mathbf{B} = \mathbf{0}$, $\mathbf{R}+\mathbf{U} \in \mathscr{C}_n$ and $\mathbf{R} +\mathbf{U}$ has maximal rank among solutions to $\mathbf{X}\mathbf{B} = \mathbf{0}$.
\label{thm:maxrank}
\end{Theorem}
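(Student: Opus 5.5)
Assume $\mathcal{V}_\mathbf{B}$ is nonempty (otherwise there is nothing to prove). The plan uses two facts: $\mathcal{V}_\mathbf{B}$ is a convex set, and for positive semidefinite matrices rank is governed by kernels. The kernel fact is the following. For psd $\mathbf{R}_1, \mathbf{R}_2$ and $\lambda \in (0,1)$,
\[
\ker(\lambda \mathbf{R}_1 + (1-\lambda)\mathbf{R}_2) = \ker \mathbf{R}_1 \cap \ker \mathbf{R}_2 ,
\]
because $x^T(\lambda \mathbf{R}_1 + (1-\lambda)\mathbf{R}_2)x = 0$ forces $x^T\mathbf{R}_1x = x^T\mathbf{R}_2 x = 0$. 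Convexity of $\mathcal{V}_\mathbf{B}$ is clear: it is the intersection of $\mathscr{C}_n$ with the linear subspace $\mathcal{C}_3$, and it is also a face of the elliptope by the Laurent--Poljak characterization quoted above.

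First I will show that all maximal-rank elements share one kernel. Let $r^*$ be the maximal rank on $\mathcal{V}_\mathbf{B}$, attained at $\mathbf{R}^*$. For any $\mathbf{R} \in \mathcal{V}_\mathbf{B}$, the midpoint $\frac12(\mathbf{R}+\mathbf{R}^*)$ lies in $\mathcal{V}_\mathbf{B}$ and has kernel $\ker \mathbf{R} \cap \ker \mathbf{R}^*$. Maximality of $r^*$ then forces $\ker \mathbf{R}^* \subseteq \ker \mathbf{R}$. So a maximal-rank matrix has the smallest kernel,
\[
W := \bigcap_{\mathbf{R}\in\mathcal{V}_\mathbf{B}} \ker \mathbf{R},
\]
and conversely any $\mathbf{R}$ with $\ker \mathbf{R} = W$ has rank $r^*$. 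Thus the maximal-rank set is exactly $\{\mathbf{R} \in \mathcal{V}_\mathbf{B} : \ker \mathbf{R} = W\}$.

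Next I will identify this set with $\text{ri}(\mathcal{V}_\mathbf{B})$ in two directions.

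\emph{Relative interior has maximal rank.} Take $\mathbf{R} \in \text{ri}(\mathcal{V}_\mathbf{B})$ and a maximal-rank $\mathbf{R}^*$. By the standard line-extension property of the relative interior, $\mathbf{R}_\mu = \mathbf{R} + \mu(\mathbf{R} - \mathbf{R}^*) \in \mathcal{V}_\mathbf{B}$ for small $\mu > 0$. Then $\mathbf{R}$ is a strict convex combination of $\mathbf{R}_\mu$ and $\mathbf{R}^*$, so $\ker \mathbf{R} \subseteq \ker \mathbf{R}^* = W$. Hence $\ker \mathbf{R} = W$ and $\mathbf{R}$ has rank $r^*$.

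\emph{Maximal rank lies in the relative interior.} This is the step I expect to be the main obstacle, since it needs a concrete description of the affine hull. Let $\mathbf{R}$ have kernel $W$ and let $\mathbf{Q}$ be an orthonormal basis of $W^\perp$, so $\mathbf{R} = \mathbf{Q}\mathbf{S}\mathbf{Q}^T$ with $\mathbf{S} \succ 0$. Every element of $\mathcal{V}_\mathbf{B}$ has kernel containing $W$, so
\[
\mathcal{V}_\mathbf{B} = \{\mathbf{Q}\mathbf{T}\mathbf{Q}^T : \mathbf{T} \succeq 0,\ \text{diag}(\mathbf{Q}\mathbf{T}\mathbf{Q}^T) = \mathbf{1}\}.
\]
The affine hull of $\mathcal{V}_\mathbf{B}$ is contained in the affine space $L = \{\mathbf{Q}\mathbf{T}\mathbf{Q}^T : \mathbf{T} \text{ symmetric},\ \text{diag}(\mathbf{Q}\mathbf{T}\mathbf{Q}^T) = \mathbf{1}\}$. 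Since $\mathbf{S} \succ 0$, every element of $L$ close enough to $\mathbf{R}$ has positive definite $\mathbf{T}$ and hence lies in $\mathcal{V}_\mathbf{B}$. So a relative neighborhood of $\mathbf{R}$ in $L$, and a fortiori in the affine hull, is contained in $\mathcal{V}_\mathbf{B}$. This gives $\mathbf{R} \in \text{ri}(\mathcal{V}_\mathbf{B})$, and as a byproduct the affine hull equals $L$.

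Finally, for the approximation statement, take any $\mathbf{R} \in \mathcal{V}_\mathbf{B}$ and a maximal-rank $\mathbf{R}^*$. Set $\mathbf{U} = \lambda(\mathbf{R}^* - \mathbf{R})$ with $\lambda \in (0,1]$ small enough that $\|\mathbf{U}\| < \epsilon$. Then $\mathbf{R} + \mathbf{U} = (1-\lambda)\mathbf{R} + \lambda \mathbf{R}^*$ lies in $\mathcal{V}_\mathbf{B}$ by convexity, so it is a correlation matrix and satisfies $(\mathbf{R}+\mathbf{U})\mathbf{B} = \mathbf{0}$. Its kernel is $\ker \mathbf{R} \cap W = W$, so it has maximal rank. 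The maximal-rank matrices therefore form a dense subset of $\mathcal{V}_\mathbf{B}$, which is the claimed approximation property.
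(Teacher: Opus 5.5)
Your proof is correct, and it takes a different, more elementary route than the paper's.

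The paper works through the facial structure of the elliptope. It imports two external results:
\begin{itemize}
\item the Laurent--Poljak description of the faces of $\mathscr{C}_n$ as the sets $\{\mathbf{Y}\in\mathscr{C}_n:\ker\mathbf{Y}\supseteq V\}$;
\item Rockafellar's theorem that the relative interiors of faces partition a convex set.
\end{itemize}
From these it derives that the smallest face containing $\mathbf{X}$ is $\{\mathbf{Y}\in\mathscr{C}_n:\ker\mathbf{Y}\supseteq\ker\mathbf{X}\}$, and that $\mathbf{X}$ lies in its relative interior. It then concludes that $\text{ri}(\mathcal{V}_\mathbf{B})=\{\mathbf{R}\in\mathscr{C}_n:\ker\mathbf{R}=\ker\mathbf{R}_0\}$ for any $\mathbf{R}_0\in\text{ri}(\mathcal{V}_\mathbf{B})$. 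Rank--nullity gives the equivalence, and Rockafellar's segment theorem (Theorem 6.1) gives the approximation.

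You never use that $\mathcal{V}_\mathbf{B}$ is a face. The identity $\ker(\lambda\mathbf{R}_1+(1-\lambda)\mathbf{R}_2)=\ker\mathbf{R}_1\cap\ker\mathbf{R}_2$ for psd matrices does the work of the paper's smallest-face lemmas. Your parametrization by $\mathbf{Q}\mathbf{T}\mathbf{Q}^T$ replaces the partition theorem in the direction ``maximal rank $\Rightarrow$ relative interior.''

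What each approach buys:
\begin{itemize}
\item Your version is self-contained and carries over to any spectrahedron (the psd cone intersected with an affine subspace). It also yields an explicit description of $\text{aff}(\mathcal{V}_\mathbf{B})$ as a byproduct.
\item The paper's version places $\mathcal{V}_\mathbf{B}$ in the face lattice of the elliptope, identifying it with the face generated by any of its relative-interior points.
\end{itemize}

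Your perturbation $\mathbf{U}=\lambda(\mathbf{R}^*-\mathbf{R})$ with $\lambda\in(0,1]$ also has the correct sign and range. The paper writes $\mathbf{U}=\lambda(\mathbf{R}-\mathbf{R}_0)$ and does not restrict $\lambda$ to at most $1$.

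One point should be stated explicitly. Your formula for $\mathcal{V}_\mathbf{B}$ needs the reverse inclusion, so that matrices in $L$ near $\mathbf{R}$ actually satisfy $\mathbf{X}\mathbf{B}=\mathbf{0}$. That inclusion uses $\text{col}(\mathbf{B})\subseteq W$. This holds because $\mathcal{V}_\mathbf{B}\neq\emptyset$ and each member's kernel contains $\text{col}(\mathbf{B})$.
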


\subsection{Probability of violation}

The most common linear query would be a simple aggregation where the constraint vector is a vector of ones. The vector of ones is a balanced vector and hence meets the necessary and sufficient condition for existence of a correlation matrix which annihilates the vector. In face in the case of the vector of ones, the max-rank, max entropy corelation matrix ${\bar{\bR}}$ can be hard-coded and would not need any optimization. The correlation for this special and most relevant case is given in the next section. 
Even when there are multiple queries, the columns of the constraint matrix  will, in most cases, have disjoint support, meaning the location of nonzero entries of the columns will be disjoint sets.  Hence jointly they will satisfy  the necessary and sufficient condition for existence of a single correlation matrix that annihilates all of them simultaneously. 

However, there maybe cases then the query vector is not a constant vector. In such a case, it is reasonable to determine the likelihood of having a query vector which does not belong to the balanced set, as defined in \eqref{eq:balanced}. While the queries are mostly fixed vector, in situation, such as survey sampling, the vector could be a vector of weights randomly generated based on random sample sizes. The most likely scenario when the vector won't be balanced, that is one entry will dominate the sum of the rest (in absolute value) is then the magnitudes of the entries follow a heavy-tailed distribution. The condition that the constraints are not met in a single constraint case has deeper connection to statistical process control  in many engineering applications. The event that the constraints are violated can be connected to a process being out of control in the peak-to-average (PTA) ratio measure. The tail event of the out-of-control PTA  used in the monitoring \cite{arendarczyk2018joint,balakrishnan2013scale, qeadan2012joint, kozubowski2010distributions, haas1992maximum, breiman1965some, morrison1965some, arov1960extreme, darling1952influence} is precisely the constraint violation event in the current set up. When the magnitude of the entries in the constraint vector arise from a Pareto Type-II distribution, we can provide the following tail bound.
\begin{Proposition}
Let $\bba \in \RR^n$ be a constraint vector whose entries follow a Lomax or a Pareto Type-II distribution. 
Then $P(\bba \nin \mathscr{B}^n) \leq (2n) (0.5)^{n}$
\label{prop:lomax_violation}
\end{Proposition}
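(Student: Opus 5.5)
The plan is a union bound over the coordinate that could break balance, followed by a tail estimate for a single heavy-tailed variable exceeding the sum of the others. Write $Y_i = |b_i|$, and assume, as the statement implicitly does, that $Y_1, \ldots, Y_n$ are i.i.d.\ Lomax with shape $\alpha$ and scale $\lambda$, so that $P(Y_i > y) = (1+y/\lambda)^{-\alpha}$. By \eqref{eq:balanced}, $\bba \nin \mathscr{B}^n$ exactly when some $Y_i > S_{-i} := \sum_{j\neq i} Y_j$. Exchangeability and a union bound then give
\[ P(\bba \nin \mathscr{B}^n) \le n\, P(Y_1 > S_{-1}), \]
so it suffices to prove $P(Y_1 > S_{-1}) \le 2\,(0.5)^n$. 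The factor $2$ is natural, since $(0.5)^{n-1} = 2(0.5)^n$ with $n-1$ the number of summands in $S_{-1}$.

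For the single-coordinate bound, I would condition on $S_{-1}$ and use the explicit Lomax tail:
\[ P(Y_1 > S_{-1}) = E\big[(1+S_{-1}/\lambda)^{-\alpha}\big]. \]
The aim is to show this expectation decays at least like $(1/2)^{n-1}$. The idea is that a sum of $n-1$ i.i.d.\ positive variables is rarely small, and on the event that it is large the Lomax tail is small.

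I would split at a threshold $t$:
\[ E\big[(1+S_{-1}/\lambda)^{-\alpha}\big] \le P(S_{-1}\le t) + (1+t/\lambda)^{-\alpha}. \]
The first term is controlled by a Chernoff bound on the lower tail, which needs only the Laplace transform $\psi(\theta)=E e^{-\theta Y}$. Unlike the moment generating function, $\psi$ is finite for Lomax, and
\[ P(S_{-1}\le t) \le e^{\theta t}\,\psi(\theta)^{n-1}. \]
Choose $\theta$ so that $\psi(\theta)\le 1/4$, for instance, and take $t$ proportional to $n$ so that $e^{\theta t}\psi(\theta)^{n-1}\le (1/2)^{n-1}$. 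The second term is then $(1+cn)^{-\alpha}$.

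The second term is where I expect the real obstacle. A Lomax tail decays only polynomially, so $(1+cn)^{-\alpha}$ cannot beat $(1/2)^{n-1}$ for large $n$ at a fixed threshold of order $n$. Taking $t$ exponential in $n$ instead breaks the Chernoff side. I would therefore abandon the split and bound the expectation directly. Writing $(1+s/\lambda)^{-\alpha} = \Gamma(\alpha)^{-1}\int_0^\infty u^{\alpha-1}e^{-u(1+s/\lambda)}\,du$ and using independence gives
\[ E\big[(1+S_{-1}/\lambda)^{-\alpha}\big] = \Gamma(\alpha)^{-1}\int_0^\infty u^{\alpha-1}e^{-u}\,\psi(u/\lambda)^{n-1}\,du. \]
One then needs $\psi(u/\lambda)\le 1/2$ on the region that dominates the integral, with the contribution from small $u$ absorbed into the constant $2$. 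Whether this goes through uniformly in $\alpha$ is doubtful: for small $\alpha$ the maximum dominates the sum with non-negligible probability. I would therefore check the admissible range of shape parameters first, expecting the constant $(0.5)^n$ to require something like $\alpha\ge 1$, and state that restriction explicitly if needed.
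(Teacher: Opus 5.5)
There is a genuine gap, and it lies in the probability model rather than in the estimates. You read the statement as saying $|b_1|,\ldots,|b_n|$ are i.i.d.\ Lomax$(\alpha,\lambda)$. Under that reading your target $P(Y_1>S_{-1})\le(1/2)^{n-1}$ is false for every $n\ge 3$ and every $\alpha,\lambda$. So no choice of threshold, and no restriction such as $\alpha\ge 1$, can rescue the argument.

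Your own integral shows this. It equals $E[\psi(U/\lambda)^{n-1}]$ with $U\sim\mathrm{Gamma}(\alpha,1)$, and the case $n=2$ of the same formula gives $E[\psi(U/\lambda)]=P(Y_1>Y_2)=1/2$. Since $\psi(U/\lambda)$ is non-degenerate, Jensen's inequality gives
\[
P(Y_1>S_{-1}) = E\big[\psi(U/\lambda)^{n-1}\big] > \big(E[\psi(U/\lambda)]\big)^{n-1} = (1/2)^{n-1}, \qquad n\ge 3 .
\]
The events $\{Y_i>S_{-i}\}$ are pairwise disjoint, since both holding would force $Y_i>Y_j>Y_i$. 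Hence your union bound is an equality, and $P(\bba\nin\mathscr{B}^n)=nP(Y_1>S_{-1})>2n(0.5)^n$: the inequality is reversed.

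The decay is in fact only polynomial. Convexity of $s\mapsto(1+s/\lambda)^{-\alpha}$ gives $P(\bba\nin\mathscr{B}^n)\ge n\big(1+\tfrac{n-1}{\alpha-1}\big)^{-\alpha}$ for $\alpha>1$; for example, this is at least $1/n$ when $\alpha=2$. For $\alpha\le 1$ it is worse still. This is exactly why the small-$u$ part of your integral cannot be absorbed into the constant $2$.

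The missing idea is the dependence structure behind the paper's one-line proof, which invokes Proposition~1 of \cite{arendarczyk2018joint}. There the Pareto Type-II entries form a multivariate Lomax vector obtained by mixing i.i.d.\ exponentials with a \emph{single} gamma scale: $|b_i|=\lambda E_i/G$, with $E_1,\ldots,E_n$ i.i.d.\ standard exponential and one common $G\sim\mathrm{Gamma}(\alpha,1)$. Each marginal is Lomax$(\alpha,\lambda)$, but the coordinates are dependent.

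Membership in $\mathscr{B}^n$ is invariant under multiplying all coordinates by one positive scalar. So $\lambda$ and $G$ drop out, and the event is determined by the $E_i$ alone. Your reduction then finishes the argument with no splitting: $P(E_1>E_2+\cdots+E_n)=E\big[e^{-(E_2+\cdots+E_n)}\big]=(1/2)^{n-1}$. Hence $P(\bba\nin\mathscr{B}^n)=n(1/2)^{n-1}=2n(0.5)^n$, with equality. Equivalently, in the representation above, a common mixing variable replaces the random quantity $\psi(U/\lambda)$ by the constant $1/2$, which removes the Jensen gap.

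So your union bound and your conditioning/Laplace-transform step are the right tools. But the proposition has to be stated and proved for this common-mixing model, not for i.i.d.\ Lomax entries, and you should make that assumption explicit.
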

The bound is conservative and can be sharpened considerably using specific distributional properties of the entries of the constraint vector. Moreover, if the distribution of the constraint vector entries can be represented as a mixture where, under each component of the mixture, the probability of generating an unbalanced vector is exponentially small, the overall probability of generating an unbalanced vector under the mixture will also be exponentially small.

\section{Numerical illustration}

\subsection{ Simple Aggregation:}

Consider a simplified example (motivated by the Census application) where we impose that block level counts $Q_i$ (for a certain state) must sum to the state-level total $S$. This implies that the added noise $X_i$ must satisfy the constraint $\bba^T \mathbf{X} = \mathbf{0}$ where $\mathbf{b}$ is the vector of $n$ ones. Obviously $\mathbf{b}$ is balanced. In this simple case, it is possible to give an exact description of a correlation matrix $\mathbf{R}$ satisfying $\mathbf{R}\mathbf{b} = \mathbf{0}$. Take ${\bar{\bR}}_n $ to be the matrix with $1$ on the diagonal, and $-\frac{1}{n-1}$ elsewhere, i.e 
\begin{equation}
    \label{eq:intraclass}
{\bar{\bR}}_n = \left(1 + \frac{1}{n-1}\right)\bI_n - \frac{1}{n-1}\bone_n\bone^T_n 
\end{equation}
where $\bI_n$ is the $n\times n$ identity matrix and $\bone_n $ is the $n\times 1$ vector of ones. Note that ${\bar{\bR}}_n $  is the unique $n\times n$ rank-deficient {\it intraclass correlation matrix} of rank $(n-1).$ Given that ${\bar{\bR}}_n $ is the only max rank solution, for all reasonably small values of the penalty $\tau$, ${\bar{\bR}}_n $ is also the max-entropy solution to \eqref{eq:corr_opt}.

More generally, suppose there are multiple aggregation constraints on a disjoint set of queries, such as if one wishes to have the county totals as invariants for several counties in a state. In that case, the constraint matrix would be of the block-diagonal form $$\mathbf{B} = \begin{bmatrix}
    \mathbf{1}_{n_1} & \\ & \mathbf{1}_{n_2} && \\& & \ddots &&\\ &&&\mathbf{1}_{n_N}
\end{bmatrix}. $$ 

Then it is easy to see that the block-diagonal correlation matrix $$\bR = \begin{bmatrix}{\bar{\bR}}_{n_1} & \\ & \ddots & \\ & & {\bar{\bR}}_{n_N}\end{bmatrix}$$ annihilates $\mathbf{B}$ and is the max-rank, max-entropy solution to \eqref{eq:corr_opt}. 

Let $\mathbf{M}$ be the matrix formed by taking the $n \times n$ identity matrix $\mathbf{I}_n$ with $n = \sum_{i=1} n_i$ and deleting its rows at indices $n_1, n_1 + n_2 ,..., \sum_{i=1} n_i $. One can check that the rows of this matrix extend those of $\mathbf{B}^T$ to a basis. The transformation $\mathbf{R} \rightarrow \mathbf{M} \mathbf{R} \mathbf{M}^T$ will have the effect of deleting the final row and column from each of the constituent intraclass correlation matrices $\mathbf{\bar{R}}_n$. After some algebraic manipulations we can express the inverse 

$$(\mathbf{M}\mathbf{R}\mathbf{M}^T)^{-1} = \begin{bmatrix}{\mathbf{H}}_{n_1} & \\ & \ddots & \\ & & \mathbf{H}_{n_N}\end{bmatrix},$$ where $\mathbf{H}_i = \frac{n_i-1}{n_i}(\bI_{n_i -1} + \bone_{n_i -1}\bone^T_{n_i - 1} )$. Furthermore if we let $\tilde{\mathbf{H}}_i$ represent $\mathbf{H}_i$ with a row and column of zeros added at the end, we can write:
$$\mathbf{M}^T(\mathbf{M}\mathbf{R}\mathbf{M}^T)^{-1}\mathbf{M} = \begin{bmatrix}{\tilde{\mathbf{H}}}_{n_1} & \\ & \ddots & \\ & & \tilde{\mathbf{H}}_{n_N}\end{bmatrix}.$$

In this case, we compute a bound for $a_M$, which is the analog of sensitivity in the correlated Gaussian mechanism. It is straightforward to see that
$$a_M \ge \sum_{i=1}^N (Q_i(\mathcal{D})-Q_i(\mathcal{D}'))^T\tilde{\mathbf{H}}_{n_i}(Q_i(\mathcal{D})-Q_i(\mathcal{D}')),$$
where here $Q_i(\mathcal{D})$ represents the vector of block-level counts for the $i$th county given the dataset $\mathcal{D}$, and where $\mathcal{D}$ and $\mathcal{D}'$ are neighboring datasets in the invariant data universe. Define $\|\bx\|_{\bH} = \sqrt{\bx^T\bH\bx}$ to be the $\bH-$weighted norm of a vector $\bx$ where $\bH$ is a positive definite matrix. Then
\[ a_M =  \sum_{i=1}^N \underset{\mathcal{D} \sim \mathcal{D}'}{\sup} \;\|Q_i(\mathcal{D})-Q_i(\mathcal{D}')\|_{\bH_{n_i}}.\]
Thus, the proposed procedure would not require any optimization for implementation of Gaussian -DP mechanism under simple aggregation constraints. This is a distinct advantage of the proposed method. 

\begin{Remark}
For generating under constraints, one could also view this conditionally on the query. For example, in order to generate a Gaussian noise vector $\bX$ under the condition $\bba^{\prime}\bX = \bzero,$ one could generate from the conditional distribution $\bX | \bba^{\prime} \bX = \bzero.$ Under the Gaussian mechanism with constant noise variance, the conditional distribution is a simple multivariate Gaussian, and it is no surprise that the correlation matrix coincides with the intra-class correlation matrix obtained from the prescribed algorithm. When the privacy budgets are different for different records, however, it is challenging to devise a conditional algorithm where the marginal variances of the conditional noise distribution are specified.
\end{Remark}

The implementation can be adapted to the nested (hierarchical) structure that is common in Census-like application. The procedure has the advantage that it naturally maintain the parent-child consistency across all branches of the hierarchy while honoring the invariant constraints at the highest level of the hierarchy.  Suppose now we have a database $\mathcal{D}$ which is partitioned into a nested structure. At the top level $\mathcal{D}$ is split into states, then counties, then blocks. For simplicity we consider only three levels in this simulation, but the same idea is easily extensible to deeper hierarchical structures.

Denote the state, county and block level counts by $S_{\text{state},i}, S_{\text{county},j}, $ and $S_{\text{block},k}$. We require that 
\begin{align*}
    S_{\text{state},i} &= \sum_{j \in I_{\text{county},i}} S_{\text{county},j} \hspace{15pt}\forall i \in I_{\text{state}} \\
    S_{\text{county},j}&=\sum_{k \in I_{\text{block},j} }S_{\text{block},k}\hspace{15pt} \forall j \in I_{\text{county},i}\forall i \in I_{\text{state}}
\end{align*}

Here we have written $I_{\text{block},j}$ for the set of blocks in the $j$th county, $I_{\text{county},i}$ for the set of counties in the $i$th state and $I_{\text{state}}$ for the set of all states. 

One  may have a different privacy budget at each level. The state level counts $S_{\text{state},i}$ are kept as invariants so must be undisturbed by the privacy mechanism. The implementation starts by applying Gaussian noise at the block level, according to the constraint that the total state counts are invariant. Thus, we replace $$ S_{\text{block},k} \rightarrow S_{\text{block},k} + X_{k},$$
where $X_{k}$ are jointly normal and satisfy $\sum_{j \in I_{\text{county},i}}\sum_{k\in I_{\text{block},j}}X_k=0$ almost surely for each state $i$. The covariance matrix of $\mathbf{X} = (X_k)$ will be assumed to have the form $\boldsymbol{\Sigma}_{\text{block}} = \sigma_{\text{block}}^2 \mathbf{R}_{\text{block}}$, where $\mathbf{R}_{\text{block}}$ is chosen to satisfy the constraint, and where $ \sigma_{\text{block}}^2 $ will be determined later. Since we have added noise at the block level, consistency demands that at the county level we have new totals $$ S_{\text{county},j} \rightarrow  S_{\text{county},j}  + \sum_{k\in I_{\text{block},j}} X_k.$$ Write $\tilde{\mathbf{X}} = (\sum_{k\in I_{\text{block},j}} X_k)_{j}$ for the implied added noise. Define:
\begin{align*}
    \boldsymbol{R}_\text{reduced} = \sum_{k\in I_{\text{block},j}}\sum_{k'\in I_{\text{block},j'}} (\mathbf{R}_\text{block})_{k,k'},
\end{align*} 
This may not be a true correlation matrix because the diagonal entries may vary. However, this is not an issue because Theorem \ref{thm:gaussianDP} only uses the positive semi-definiteness of $\mathbf{R}$. Now let $\sigma_{\text{min}}^2(\mathbf{R},\mathbf{M},\epsilon,\delta)$ be the minimum required noise variance according to Theorem \ref{thm:gaussianDP}. If we choose 
$$ \sigma_\text{block}^2 = \max \{\sigma_{\text{min}}^2(\mathbf{R}_\text{block},\mathbf{M}_\text{block}, \epsilon_\text{block},\delta_\text{block}),\sigma_{\text{min}}^2(\mathbf{R}_\text{reduced},\mathbf{M}_\text{county}, \epsilon_\text{county},\delta_\text{county})\},$$
we will have the required privacy  guarantee at both levels.

Table~1-2 show the results of a limited synthetic experiment that demonstrates this simple case of noise addition at multiple levels. Simulated population counts were generated by randomly generating a number of counties for each of the two states, then randomly generating a number of blocks for each county. County level population counts were generated randomly and totals were calculated for the county and state levels. A block-level Gaussian noise vector was generated according to the procedure above with $\sigma_{block} = 46.041$. The noise was rounded to the nearest integer and applied at the block level. County-level noise was calculated using the un-rounded block level noise by summing according to the consistency requirement and then rounding. Rounding to the nearest integer does not break privacy guarantees but has a chance of introducing small deviations from consistency. To maintain strict consistency, one can omit this step and allow non-integer counts in the noise-infused dataset.

\begin{table}[t]\centering\begin{tabular}{|c|c|c|c|c|c|c|c|c|c|c|c|c|c|}\hline
State & \multicolumn{6}{c|}{2576} &\multicolumn{7}{c|}{2674} \\\hline
County & \multicolumn{3}{c|}{1239} &\multicolumn{3}{c|}{1337} &\multicolumn{2}{c|}{546} &\multicolumn{2}{c|}{461} &\multicolumn{3}{c|}{1667} \\\hline
Block & 861&271&107&701&21&615&215&331&88&373&872&664&131\\\hline
\end{tabular}\caption{The initial population counts are shown for three levels: state, county and block. The counts are set up so that the count sums are consistent between levels.  }\label{table:tab}\end{table}
\vspace{5pt}
\begin{table}[t]\centering\begin{tabular}{|c|c|c|c|c|c|c|c|c|c|c|c|c|c|}\hline
State & \multicolumn{6}{c|}{2576} &\multicolumn{7}{c|}{2674} \\\hline
County & \multicolumn{3}{c|}{1270} &\multicolumn{3}{c|}{1306} &\multicolumn{2}{c|}{618} &\multicolumn{2}{c|}{428} &\multicolumn{3}{c|}{1628} \\\hline
Block & 850&297&123&699&30&577&294&324&45&383&854&735&39\\\hline
\end{tabular}\caption{Gaussian noise with  is added to the simulated county level counts, using the procedure described in Section 5.1. The noise values are rounded to the nearest integer before being added to the counts. Though the state level totals remain unchanged, the lower level counts are obfuscated to meet the privacy requirement. }\label{table:tab2}\end{table}

\subsection{Laplace Mechanism With Aggregation Constraints}

\begin{figure}[t]
\centering
  \includegraphics[width=2.5in]{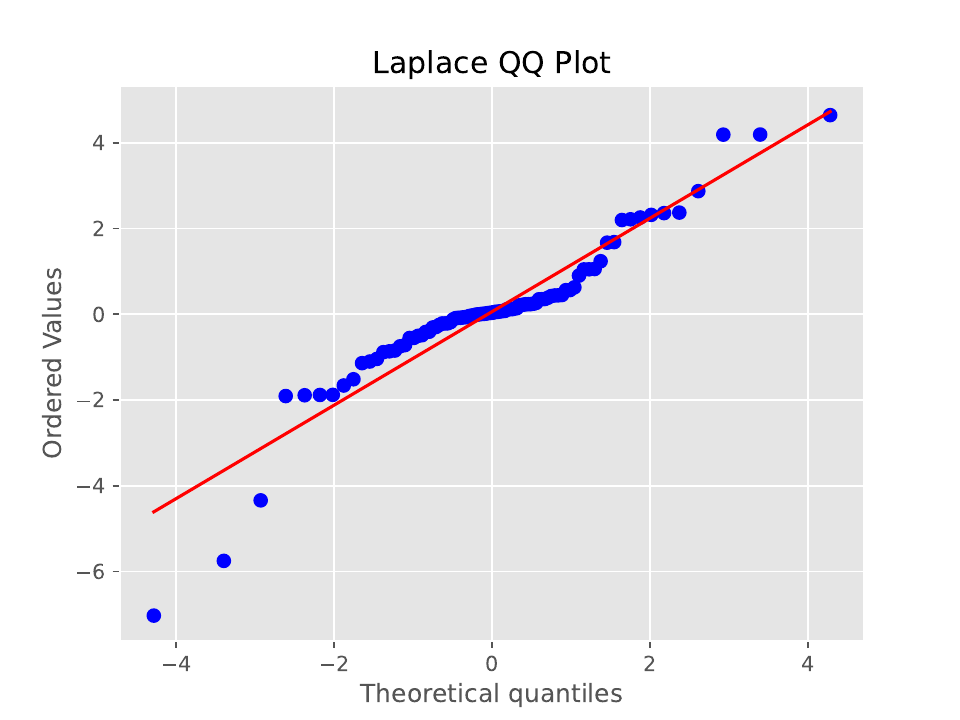}
  \includegraphics[width=2.5in]{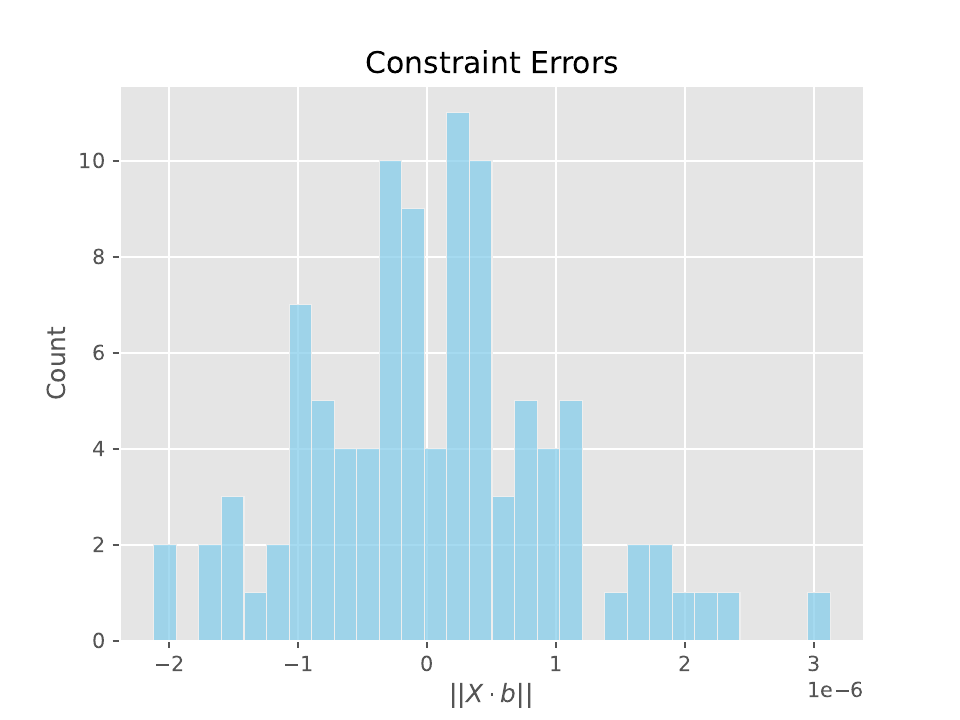}
  \caption{Left:  A Q-Q Plot shows  agreement with the desired marginal distribution; Right: Absolute deviations from satisfaction of constraint when sampling noise using the method described.}
\label{fig:qqplot}  
\end{figure}
The proposed noise addition can be easily adapted to have marginally Laplace noise by representing the Laplace distribution as a Gaussian scale mixture. Once the scales are generated from the mixing distribution, a Gaussian noise vector is generated following the proposed scheme with a desired correlation matrix.  
We considered the case where $\bB = \bba$ was a single balanced constraint. We  use the procedure in Section~2 and the construction from Theorem~3.2 of \cite{Delorme1993} to generate Laplace distributed random variates that satisfied the required constraint. 
We choose $n=100$ where $n$ is the size of the database and generated $r=100$ samples. The Q-Q plot of the sampled noise is shown in the left panel of Figure~\ref{fig:qqplot}, along with a plot of the deviations of $\bX\cdot\bba$ from $0$ in the right panel of Figure~\ref{fig:qqplot}. As expected, the deviations are negligible and are of the order of 10-E6.   

The correlation matrix is found following Algorithm~2. Figure~2 shows the convergence metrics of the algorithm both with and without the maximal entropy penalty.


\begin{figure}[h!]
  \includegraphics[width = \linewidth]{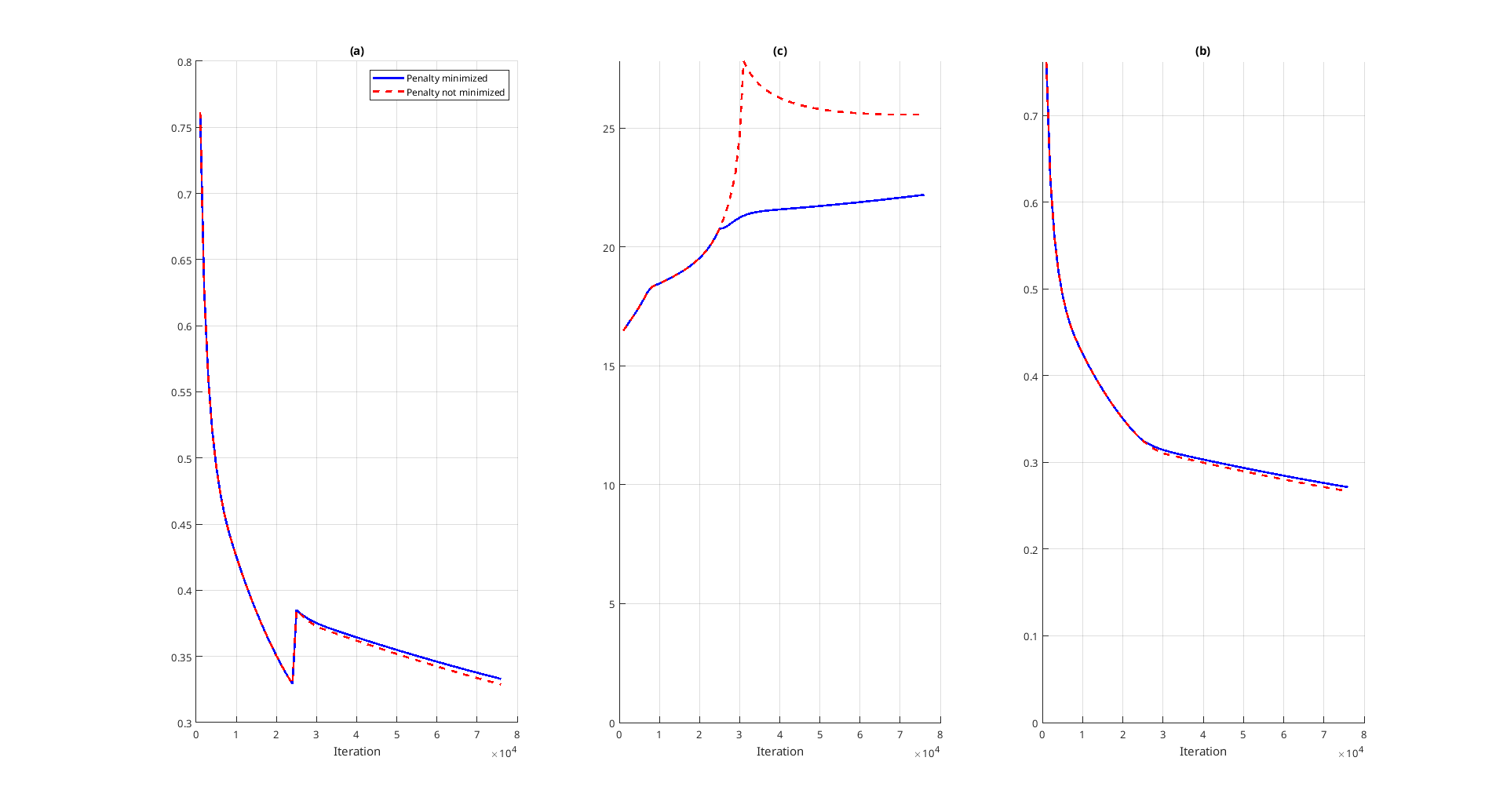}
  \caption{Algorithm 1 Convergence on simulated data. The blue color indicates the penalty term $\log \det (\mathbf{K R K^T})$ is included in the minimization. (a) The loss $\text{Tr}(\mathbf{RB^T B}) - \tau \log \det (\mathbf{K R K^T})$ (b) The penalty term $- \log \det (\mathbf{K R K^T})$
  \label{fig:alg1convergence} (c) The trace term $\text{Tr}(\mathbf{RB^T B}) $}
\end{figure}

\section{Discussion}
We have developed a differentially private data release mechanism that explicitly preserves linear aggregation constraints while providing formal theoretical privacy guarantees. Unlike standard noise-addition mechanisms, which may violate known aggregate relationships, the proposed approach ensures that released data remain consistent with prescribed linear constraints. This property is particularly important in applications where aggregate quantities are known exactly or must be preserved for legal, operational, or scientific reasons. We established the differential privacy guarantees of the proposed mechanism and, in the course of the theoretical analysis, derived some new and potentially useful properties of the associated correlation matrices. These results are of independent mathematical interest.

The present work focuses on linear aggregation constraints. Future investigation will look into more general systems of linear equality and inequality constraints. Such constraints arise naturally in many statistical and data publication problems, particularly when unit-level quantities are required to satisfy nonnegativity or other structural restrictions. Developing differentially private mechanisms that simultaneously satisfy these constraints while maintaining strong privacy and utility guarantees presents both theoretical and computational challenges. In particular, the geometry of the resulting feasible region, which is typically a convex polyhedron or polyhedral cone, raises interesting questions concerning the design of optimal noise distributions, efficient projection algorithms, and statistical inference under constrained perturbations.

The proposed methods are applicable to other domains as well where consistency constraints are intrinsic to the data generation process. For example, smart meter data often satisfy aggregation relationships across households, feeders, substations, or time intervals, and preserving these relationships is essential for downstream analyses such as load forecasting, demand response, and network monitoring. Similarly, wearable device data are frequently summarized over multiple temporal or spatial resolutions, requiring released data to remain coherent across aggregation levels. In such settings, there is a need for privacy mechanisms in which not only the individual perturbations but also specified aggregates of the injected noise satisfy prescribed constraints. The framework developed in this paper provides a foundation for constructing such mechanisms.


\section*{Disclosure statement}\label{disclosure-statement}

The authors have no conflicts of interest to declare. Generative AI tools were not used in this work. 

\section*{Acknowledgment}
The authors are grateful to the late Professor Nicholas J. Higham for generously providing the code for the 'nearest correlation algorithm' from his work which the authors used to compare the correlation matrix generation scheme. The authors are also thankful for Professor Higham's helpful correspondence regarding the use of the nearest correlation code.



\section*{Appendix: Proofs of Theoretical Results}

\noindent{\bf Proof of Theorem~\ref{thm:gaussianDP}}

\begin{proof}
We start with the case where $\boldsymbol{\Sigma}$ is non-singular. Let $\mathbf{v} = Q(\mcD) - Q(\mcD')$. The privacy loss for a multivariate Gaussian  mechanism is 

\[
\mathcal{L} = -\frac{1}2 \mathbf{X} ^T \boldsymbol{\Sigma}^{-1}\mathbf{X} + \frac{1}2(\mathbf{X} + \mathbf{v})^T \boldsymbol{\Sigma}^{-1}( \mathbf{X} + \mathbf{v}) \\
 = \mathbf{v}^T \boldsymbol{\Sigma}^{-1} \mathbf{X} + \frac{1}2\mathbf{v}^T \boldsymbol{\Sigma}^{-1}\mathbf{v} \\
\]

Let $a = \mathbf{v}^T \mathbf{R}^{-1} \mathbf{v}$ where $\boldsymbol{\Sigma} = \sigma^2 \mathbf{R}$. For a given value of $\mathbf{v}$, we have $\mathcal{L} \sim N(\frac{a}{2\sigma^2} , \frac{a}{\sigma^2}).$

Thus we have 
\[
 P(\mc{L} \geq \epsilon)   = 1 - \Phi\left(\frac{\mathcal{\epsilon}-\frac{1}2 \frac{a}{\sigma^2}}{\sqrt{\frac{a}{\sigma^2}}}\right)\\
     =1 - \Phi\left(\frac{\epsilon\sigma}{\sqrt{a}} - \frac{\sqrt{a}}{2\sigma}\right)
\]
For any $a>0, \epsilon > 0$ define $t_{a, \epsilon}(\sigma) = \frac{\epsilon \sigma }{\sqrt{a}} - \frac{\sqrt{a}}{2\sigma}$.
Differentiating $t_{a, \epsilon}(\sigma)$  with respect to $a$ gives $$\frac{\partial}{\partial a}t_{a, \epsilon}(\sigma)) = -\frac{\epsilon\sigma}{2a^{\frac{3}2}} - \frac{1}{4\sigma\sqrt{a}} < 0.$$ 
Define $a_0 = \max\{\mathbf{v}^T \mathbf{R}^{-1} \mathbf{v} | \mathbf{v} = Q(\mcD) - Q(\mcD'), \mcD \sim \mcD^{\prime}\}$.
Since $\Phi$ is strictly increasing, we have for any $a, a_0$ with $a < a_0$,
$$1- \Phi(t(a)) \le 1 - \Phi(t(a_0)).$$  thus, if $1-\Phi(t(a_0)) \le \delta$, then we will have $1-\Phi(t(a)) \le \delta$ for any two neighboring datasets. Define $\kappa(t) = \frac{1}{t\sqrt{2\pi}} e^{-\frac{t^2}{2}}$.
By Mill's inequality $1 - \Phi(t) \le \kappa(t).$
 We have $\kappa'(t) = \frac{1}{\sqrt{2\pi} }\left[-\frac{e^{-\frac{t^2}{2} }}{t^{2}}-{e^{-\frac{t^2}{2} }}\right] = - \frac{1}{\sqrt{2\pi} } e^{-\frac{t^2}{2} }\left[1 + \frac{1}{t^{2}}\right] < 0 $ so $\kappa$ is strictly decreasing on $(0,\infty)$. It is easily seen that $\kappa$ maps onto $(0,\infty)$ and since it is smooth and decreasing, it has a (restricted) inverse $\kappa^{-1}: (0,\infty) \rightarrow (0,\infty)$. Computationally, $\kappa^{-1}$ may easily be determined from a table. If we consider now $t = t_{a, \epsilon}(\sigma) = \frac{\epsilon\sigma}{\sqrt{a_0}} - \frac{\sqrt{a_0}}{2\sigma}$ as a function of $\sigma$, it is smooth on positive $\sigma$. Furthermore, 
$$\frac{\partial}{\partial \sigma} t_{a, \epsilon}(\sigma) = \frac{\epsilon}{\sqrt{a_0}} + \frac{\sqrt{a_0}}{2\sigma^2} > 0. $$
It is evident that $t_{a, \epsilon}(\sigma)$ is a bijection from $(0,\infty)$ to $(-\infty,\infty)$ and has a strict inverse $t_{a, \epsilon}^{-1}$. Choose $\sigma = t_{a, \epsilon}^{-1}(\kappa^{-1}(\delta))$. Then we have $$P(\mathcal{L}  \ge \epsilon) = 1 - \Phi(t_{a, \epsilon}(\sigma)) \le \delta.$$ 

Thus far we have assumed $\mathbf{\mathbf{\Sigma}}$ to be non-singular. Let us relax this assumption. Suppose the privacy mechanism satisfies a linear constraint $\mathbf{A} \in \mathbb{R}^{k\times n}$ so that 
$$\mathbf{A}\mathbf{X} = \mathbf{0}.$$ Take $\mathbf{T} = \begin{bmatrix}
    \mathbf{A} \\
    \mathbf{M}
\end{bmatrix}. $ By construction $\mathbf{T}$ is invertible. Now, 
\begin{align*}
    P(\mathcal{M}(Q(\mcD)) \in S) &= P(\mathbf{X} \in S-Q(\mcD))\\
    &=P(\mathbf{X} \in (S-Q(\mcD)) \cap \ker \mathbf{A})\\
    & = P(\mathbf{T}\mathbf{X}\in \mathbf{T}\{(S-Q(\mcD) \cap \ker \mathbf{A}\}) \\ 
    & = P\left(\begin{bmatrix}
    \mathbf{A} \mathbf{X}\\
    \mathbf{M}\mathbf{X}
\end{bmatrix}\in \left\{\begin{bmatrix}
    \mathbf{A}\mathbf{u} \\
    \mathbf{M} \mathbf{u}
\end{bmatrix} \bigg | \mathbf{u} \in (S-Q(\mcD)) \cap \ker \mathbf{A}\right\}\right) \\
    & = P\left(\begin{bmatrix}
    \mathbf{0}\\
    \mathbf{M}\mathbf{X}
\end{bmatrix}\in \left\{\begin{bmatrix}
    \mathbf{0} \\
    \mathbf{M} \mathbf{u}
\end{bmatrix} \bigg | \mathbf{u} \in (S-Q(\mcD)) \cap \ker \mathbf{A}\right\}\right) \\
&= P(\mathbf{M}\mathbf{X} \in \mathbf{M}((S-Q(\mcD))\cap \ker \mathbf{A}))\\
&= P(\mathbf{M}\mathcal{M}(Q(\mcD)) \in \mathbf{M}(Q(\mcD) + [(S-Q(\mcD))\cap \ker \mathbf{A}]))
\end{align*} 
Therefore,  $\mathbf{M}\mathcal{M}(Q(\mcD)) = \mathbf{M} Q(\mcD) + \mathbf{M} \mathbf{X}$ can be seen as a new unconstrained query and privacy mechanism $\mathcal{M}'$ with $Q'(\mcD) = \mathbf{M} Q(\mcD)$ and $\mathbf{X}' = \mathbf{M}\mathbf{X} \sim N(0,\mathbf{M}\mathbf{\Sigma}\mathbf{M}^T)$. Hence if $\mathbf{M}\mathbf{\Sigma}\mathbf{M}^T$ is non-singular, the previous considerations imply that $$P(\mathcal{M}'(Q'(\mcD)) \in S') \le e^\epsilon P(\mathcal{M}'(Q'(\mcD')) \in S') + \delta, $$ where $S' = \mathbf{M}(Q(\mcD) + [(S-Q(\mcD))\cap \ker \mathbf{A}]).$
Therefore, 
$$P(\mathcal{M}(Q(\mcD)) \in S) \le e^\epsilon P(\mathcal{M}(Q(\mcD')) \in S) + \delta$$
for $\mcD \sim \mcD^{\prime}$ where $\mcD, \mcD^{\prime} \in \mathscr{D}_{Q, \bA, \bc}.$ By construction, $col(\bR) = col(\bM)$ and $\bM$ is a full row rank semi-orthogonal matrix. Hence $\bM\bR\bM^T$ is nonsignular. It is clear from the definition of $a_M$, that for any other choice of $\bM$, say $\bM = \bP\bM$, where $\bP$ is orthogonal, the value of $a_M$ will remain unchanged. 
\end{proof}

\noindent{\bf Proof of Proposition~1}
\begin{proof}
 We want to solve the problem 
\[ P_{\mathcal{S}_{\bB}}(\bU) = \underset{\bV \in \mathcal{S}_{\bB}} \argmin \| \bU - \bV \|_F^2\]
 Let $\bK$ be a matrix whose columns form a basis of $\text{Col}(\bB)^{\perp}$ where $\text{Col}(\bB)$ is the column space of $\bB$. While there is no unique basis for  $\text{Col}(\bB)^{\perp}$, we make the following choice. Let $\bB = [\bP_1 : \bP_2]\begin{bmatrix} \bDelta \\ \bzero \end{bmatrix} \bQ$ be the singular value decomposition of $\bB$ where $\bP_2$ is an $n\times (n-k)$ semi-orthogonal matrix. Choose $\bK = \bP_2.$
 
 Then the projection will be an $n\times n$ matrix of rank at most $(n-k)$, and can be written as $\bK\bG$ for some matrix $\bG$. By symmetry, similarly the projection is $\bH\bK^{\prime}$ for some $\bH$. Hence, we assume the projection $P_{\mathcal{S}_{\bB}}(\bU) = \bK\bS\bK^{\prime}$ for some $\bS\in S^{n-k}$. 
Since $\bK$ is semi-orthogonal,  the problem reduces to 
\beqa 
P_{\mathcal{S}_{\bB}}(\bU) &=& \underset{\bQ \in S^{n-k}} \argmin \quad \| \bU -  \bK\bQ\bK^{\prime} \|_F^2 \\
&=& \underset{\bQ \in S^{n-k}} \argmin \quad  \tr (\bU -  \bK\bQ\bK^{\prime})(\bU -  \bK\bQ\bK^{\prime})^{\prime} \\
&=& \underset{\bQ \in S^{n-k}} \argmin \quad \tr(\bQ^2 - 2\bK^{\prime}\bU\bK\bQ)
\eeqa
Differentiating with respect to $Vec(\bQ)$ (symmetric case) and equating to zero, the solution is $\bQ = \bK\bU\bK^{\prime}.$  Hence $P_{\mathcal{S}_{\bB}}(\bU)  = \bK\bK^{\prime}\bU\bK\bK^{\prime} = \bPi \bU \bPi$ where $\bPi = \bK\bK^T$ is the orthogonal projection to the $(n-k)$ orthogonal complement $\text{Col}(\bB)^{\perp}$.
\end{proof}


\noindent{\bf Proof of Theorem~\ref{thm:multiple_constraint}}
\begin{proof}
It is straightforward to see that $\exists \bR \in \mathscr{C}^n \ni \bR\bB = \bzero$ iff for any permutation $\bP$ $\exists \tilde{\bR} \in \mathscr{C}^n \ni \tilde{\bR}\tilde{\bB} = \bzero$ for $\tilde{\bB} = \bP\bB.$ 
Therefore, assume without loss of generality, $\bB = [\bB_1^T : \bB_2^T]^T$ such that $|\det(B_1)| = \underset{i_1, \ldots, i_k}  \max  D_{i_1, \ldots, i_k}(B).$ Then $\bR\bB = \bzero$ iff $\bR\begin{bmatrix} \bI \\ \bF \end{bmatrix} = \bzero.$ If $\bR = \bU^T\bU$ where $\bU$ is a $k\times n$ matrix whose columns are unit vectors, then writing $\bU = [\bU_1 : -\bU_2]$ we have $\bR\bB = \bzero$ iff $\bU_1 = \bU_2\bF$ iff $ \bU_1^T \bU_1 = \bF^T\bU_2^T\bU_2 \bF$ iff $\exists \; \bC_1 \in \mathscr{C}^k$ and $\bC \in \mathscr{C}^{n-k}$ such that $\bF^T\bC\bF = \bC_1.$ The condition is equivalent to the existence of $\bC \in \mathscr{C}^{n-1}$ such that $\bff_j^T\bC\bff_j =1$ for $j = 1, \ldots, k.$ Since $L_{\bF}$ is a linear operator, and $\mathscr{C}^{n-k}$ is convex, the equivalence reduces to checking $\bone_k \in {\overline{conv}}(L_{\bF}(\mathscr{C}^{n-k}).$
\end{proof}

\noindent{\bf Proof of Corollary~\ref{cor:single_constraint}}
\begin{proof}
Based on the proof of the theorem, it is enough to consider $\bba = (1, \bff^T)^T$ where $\bff = (f_1, \ldots, f_{n-1})$ such that $|f_i| \leq 1.$
Since $\mathscr{C}^{n-1}$ is a compact convex set, $\overline{conv}(L_{\bff}(\mathscr{C}^{n-1}))$ is a closed interval. Hence it is enough to show that there exist $\bC_1, \bC_2 \in \mathscr{C}^{n-1}$ such that $\bff^T\bC_1\bff > 1$ and $\bff^T\bC_2 \bff \leq 1.$ Without loss of generality we can assume $1 \geq f_1 \geq \cdots \geq f_{n-1} \geq 0.$ Since $\bba \in \mcB^n,$, we have $\sum_i f_i \geq 1$ and hence for $\bC_1 =  \bone_{n-1}\bone_{n-1}^T$ we have $\bff^T\bC_1\bff \geq 1.$ Similarly is $\bee = (1, -1, \ldots, (-1)^{n-2})^T$, then for $\bC_2 = \bee\bee^T,$ we have $\bff^T\bC_2 \bff  < f_1^2 \leq 1.$
\end{proof}

\noindent{\bf Proof of Theorem~\ref{thm:maxrank}}\\
We need the following lemmas in the proof of Theorem~\ref{thm:maxrank}. We state and prove the lemmas before presenting the proof of Theorem~\ref{thm:maxrank}

\begin{Lemma}
    Every point $\mathbf{x}$ in a convex set $\mathbf{C}$ belongs to the relative interior of a unique face $\mathbf{F}_x$. 
    \label{lem:uniqueface}
\end{Lemma}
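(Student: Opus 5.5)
The plan is to use the standard face lattice of a convex set. We work in a finite-dimensional setting, the setting of this paper. Recall the definitions: a face $\mathbf{F}$ of $\mathbf{C}$ is a convex subset such that whenever a relatively open segment $(\mathbf{y},\mathbf{z})$ with $\mathbf{y},\mathbf{z}\in \mathbf{C}$ meets $\mathbf{F}$, both endpoints lie in $\mathbf{F}$. We prove existence and uniqueness separately.

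\textbf{Existence.} Given $\mathbf{x}\in\mathbf{C}$, let $\mathbf{F}_x$ be the intersection of all faces of $\mathbf{C}$ containing $\mathbf{x}$. It is nonempty, since $\mathbf{C}$ itself is such a face. An intersection of faces is again a face, which follows directly from the definition. So $\mathbf{F}_x$ is the smallest face containing $\mathbf{x}$.

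We claim $\mathbf{x}\in \text{ri}(\mathbf{F}_x)$. Suppose not. Since $\mathbf{F}_x$ is a nonempty convex set in finite dimensions, its relative interior is nonempty. The point $\mathbf{x}$ then lies in the relative boundary. By the supporting hyperplane theorem in the affine hull of $\mathbf{F}_x$, there is a linear functional $\ell$ that is nonconstant on $\mathbf{F}_x$ and satisfies $\ell(\mathbf{x})=\max_{\mathbf{F}_x}\ell$. The set $\mathbf{G}=\{\mathbf{y}\in\mathbf{F}_x:\ell(\mathbf{y})=\ell(\mathbf{x})\}$ is an exposed face of $\mathbf{F}_x$. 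A face of a face is a face of $\mathbf{C}$, again by checking the segment condition twice. Hence $\mathbf{G}$ is a face of $\mathbf{C}$ containing $\mathbf{x}$ and strictly smaller than $\mathbf{F}_x$, which contradicts minimality.

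Two side checks are needed here. First, "face of a face is a face" is verified by applying the segment definition twice. Second, the supporting hyperplane must be taken relative to $\text{aff}(\mathbf{F}_x)$, so that $\ell$ is nonconstant on $\mathbf{F}_x$ and $\mathbf{G}\neq\mathbf{F}_x$.

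\textbf{Uniqueness.} Suppose $\mathbf{x}\in\text{ri}(\mathbf{F})$ for a face $\mathbf{F}$. Then $\mathbf{F}_x\subseteq \mathbf{F}$, since $\mathbf{F}_x$ is the smallest face containing $\mathbf{x}$. For the reverse inclusion, take any $\mathbf{y}\in\mathbf{F}$ with $\mathbf{y}\neq\mathbf{x}$. Because $\mathbf{x}$ is relatively interior to $\mathbf{F}$, the segment from $\mathbf{y}$ through $\mathbf{x}$ can be extended slightly beyond $\mathbf{x}$ to a point $\mathbf{z}\in\mathbf{F}\subseteq\mathbf{C}$. Then $\mathbf{x}$ lies in the open segment $(\mathbf{y},\mathbf{z})$. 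Since $\mathbf{F}_x$ is a face containing $\mathbf{x}$, both endpoints lie in $\mathbf{F}_x$, and in particular $\mathbf{y}\in\mathbf{F}_x$. Therefore $\mathbf{F}=\mathbf{F}_x$.

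\textbf{Main obstacle.} The delicate step is the existence argument, namely showing that the minimal face actually has $\mathbf{x}$ in its relative interior. Everything else is routine verification of the segment definition.
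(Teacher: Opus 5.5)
Your proof is correct, but it takes a different route from the paper's.

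\textbf{The paper's route.} It gets existence by citing Theorem 18.2 of Rockafellar: the relative interiors of the nonempty faces of $\mathbf{C}$ partition $\mathbf{C}$. For uniqueness it takes two faces $\mathcal{F},\mathcal{G}$ with $\mathbf{x}$ in both relative interiors. It uses the partition property to get $\text{ri}(\mathcal{G})=\text{ri}(\mathcal{F})$. Then for $\mathbf{y}\in\mathcal{G}$, the points $(1-t)\mathbf{x}+t\mathbf{y}$ with small $t>0$ lie in $\text{ri}(\mathcal{F})\subseteq\mathcal{F}$, so the face property of $\mathcal{F}$ forces $\mathbf{y}\in\mathcal{F}$; symmetry gives equality.

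\textbf{Your route.} You construct the candidate face explicitly as the intersection of all faces containing $\mathbf{x}$. You then reprove the needed part of Rockafellar's theorem with a nontrivial supporting hyperplane relative to $\text{aff}(\mathbf{F}_x)$. Rockafellar's Theorem 11.6 is exactly the version you need, and it holds even when $\mathbf{F}_x$ is not closed. Your uniqueness step uses minimality of $\mathbf{F}_x$ plus extending a segment slightly beyond $\mathbf{x}$. So you never need the disjointness of relative interiors that the paper imports.

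\textbf{What each buys.} The paper's argument is shorter because the real content is outsourced to the citation. Yours is self-contained and proves more. It identifies the unique face as the smallest face containing $\mathbf{x}$, which is precisely the abstract content of the paper's Lemma~\ref{lem:riofsmallestface}. The paper proves that lemma separately, by a more roundabout contradiction argument.
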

\begin{proof}
    Theorem 18.2 of \cite{rockafellar-1970a} states that the relative interiors of a convex set partition it. Thus $x\in \text{ri}(\mathcal{F})$ for some face $\mathcal{F}$. Suppose $x\in \text{ri}(\mathcal{G})$ also. Since the relative interiors form a partition we have $\text{ri}(\mathcal{G}) = \text{ri}(\mathcal{F})$. Suppose $y \in \mathcal{G}$. Then every point $$z = (1-t)x + t y $$ is in $\mathcal{G}$. In particular, because $x$ is in the relative interior of $\mathcal{G}$, we can find some $\delta$ such that if $t<\delta$ then $z \in \text{ri}(\mathcal{G})= \text{ri}(\mathcal{F})\subseteq\mathcal{F}$. It follows by the definition of a face that $y \in \mathcal{F}$. Thus we have shown $\mathcal{G} \subseteq \mathcal{F}$. A similar argument shows that $\mathcal{F}\subseteq\mathcal{G}$. 
\end{proof}

\begin{Lemma}
\label{lem:smallestface}
    Let $\mathbf{X}\in \mathscr{C}_n$. Then the smallest face $\mathcal{F}_{\mathbf{X}}$ containing $\mathbf{X}$ can be expressed as $$\mathcal{F}_{\mathbf{X}} = \{\mathbf{Y} \in \mathscr{C}_n | \ker \mathbf{Y} \supseteq \ker \mathbf{X}\}.$$ That is, every face containing $\mathbf{X}$ is a superset of $\mathcal{F}_\mathbf{X}$. 
\end{Lemma}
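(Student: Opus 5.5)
We need to show that $\mathcal{G} := \{\mathbf{Y}\in\mathscr{C}_n : \ker\mathbf{Y}\supseteq\ker\mathbf{X}\}$ is a face containing $\mathbf{X}$, and that any face containing $\mathbf{X}$ contains $\mathcal{G}$. The first part is immediate from the characterization of faces quoted from Theorem 2.7 of \cite{Laurent1996OnTF}: taking the subspace $\mathbf{V}=\ker\mathbf{X}$ gives exactly $\mathbf{F}_{\mathbf{V}}=\mathcal{G}$. Hence $\mathcal{G}$ is a face of $\mathscr{C}_n$, and it clearly contains $\mathbf{X}$. (If one prefers not to lean on the citation, $\mathcal{G}$ is the intersection of $\mathscr{C}_n$ with the face $\{\mathbf{Y}\succeq 0:\ker \mathbf{Y}\supseteq \mathbf{V}\}$ of the psd cone. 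Faces are preserved under intersection with the affine set $\mc{C}_2$, and $\mathcal{G}$ is convex.)

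For minimality, we first show that $\mathbf{X}\in\text{ri}(\mathcal{G})$. Let $\mathbf{Y}\in\mathcal{G}$. Write $W=(\ker\mathbf{X})^\perp=\text{col}(\mathbf{X})$. Since $\ker\mathbf{Y}\supseteq\ker\mathbf{X}$, we have $\text{col}(\mathbf{Y})\subseteq W$, so both $\mathbf{X}$ and $\mathbf{Y}$ live in the space of symmetric operators on $W$. On $W$ the matrix $\mathbf{X}$ is positive definite, with smallest eigenvalue $\lambda>0$ there. Consider $\mathbf{Z}_t=(1+t)\mathbf{X}-t\mathbf{Y}$ for small $t>0$. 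It has unit diagonal and kernel containing $\ker\mathbf{X}$. On $W$ it is psd as soon as $t\|\mathbf{X}-\mathbf{Y}\|\le\lambda$. Hence $\mathbf{Z}_t\in\mathcal{G}$, so every segment from a point of $\mathcal{G}$ through $\mathbf{X}$ can be extended beyond $\mathbf{X}$ inside $\mathcal{G}$. By Theorem 6.4 of \cite{rockafellar-1970a}, this extension property is equivalent to $\mathbf{X}\in\text{ri}(\mathcal{G})$.

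Now let $\mathcal{F}$ be any face of $\mathscr{C}_n$ containing $\mathbf{X}$, and take $\mathbf{Y}\in\mathcal{G}$. By the previous step there is $\mathbf{Z}=\mathbf{Z}_t\in\mathcal{G}\subseteq\mathscr{C}_n$ with $\mathbf{X}$ in the open segment $(\mathbf{Y},\mathbf{Z})$, namely $\mathbf{X}=\frac{1}{1+t}\mathbf{Z}+\frac{t}{1+t}\mathbf{Y}$. Since $\mathcal{F}$ is a face containing a relative interior point of that segment, both endpoints lie in $\mathcal{F}$; in particular $\mathbf{Y}\in\mathcal{F}$. Thus $\mathcal{G}\subseteq\mathcal{F}$, and $\mathcal{G}$ is the smallest face containing $\mathbf{X}$. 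Alternatively, Lemma~\ref{lem:uniqueface} identifies $\mathcal{G}$ as the unique face with $\mathbf{X}$ in its relative interior, and a standard argument shows that this face is contained in every face through $\mathbf{X}$.

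The main obstacle is the positivity step: one must confirm that restricting to $W$ is legitimate, i.e.\ that $\mathbf{Z}_t$ vanishes on $\ker\mathbf{X}$ and is psd on $W$, so that a uniform $t$ works. This uses that $\mathbf{Y}$ and $\mathbf{X}$ share the invariant decomposition $\mathbb{R}^n=W\oplus\ker\mathbf{X}$, with both vanishing on $\ker\mathbf{X}$. The bound $t\le\lambda/\|\mathbf{X}-\mathbf{Y}\|$ (or any $t>0$ when $\mathbf{Y}=\mathbf{X}$) then settles it.
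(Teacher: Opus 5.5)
Your proof is correct, but it establishes minimality by a different route from the paper. The paper uses the classification in Theorem~2.7 of \cite{Laurent1996OnTF} in both directions. First, $\{\mathbf{Y}\in\mathscr{C}_n : \ker\mathbf{Y}\supseteq\ker\mathbf{X}\}$ is a face. Second, any face $\mathcal{G}$ containing $\mathbf{X}$ must have the form $\{\mathbf{Y}\in\mathscr{C}_n : \ker\mathbf{Y}\supseteq\mathcal{V}\}$ with $\mathcal{V}\subseteq\ker\mathbf{X}$, so $\mathcal{F}_{\mathbf{X}}\subseteq\mathcal{G}$ follows at once.

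You need only the easy direction, that sets of this form are faces, and you also verify it directly through the faces of the psd cone. Minimality you then prove by hand. The perturbation $\mathbf{Z}_t=(1+t)\mathbf{X}-t\mathbf{Y}$ stays in $\mathcal{G}$ for small $t>0$, because both matrices vanish on $\ker\mathbf{X}$ and $\mathbf{X}$ is positive definite on $\text{col}(\mathbf{X})$. So $\mathbf{X}$ lies in the open segment between $\mathbf{Y}$ and $\mathbf{Z}_t\in\mathscr{C}_n$, and the defining property of a face forces $\mathbf{Y}$ into every face through $\mathbf{X}$.

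The paper's argument is a two-line consequence of the cited theorem, but it rests on that theorem's nontrivial half (every face is of kernel type). Yours is self-contained. As a byproduct it shows $\mathbf{X}\in\text{ri}(\mathcal{G})$ via Theorem~6.4 of \cite{rockafellar-1970a}, which is exactly the content of Lemma~\ref{lem:riofsmallestface}; the paper proves that lemma by a longer detour through Lemma~\ref{lem:uniqueface}. Combined with the standard fact that a face is the smallest face containing any of its relative interior points, your argument even recovers the hard direction of the Laurent--Poljak characterization instead of assuming it.
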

\begin{proof}
Clearly $\mathbf{X}$ belongs to the face $\mathcal{F}_{\mathbf{X}}$. Suppose there is another face $\mathcal{G}$ containing $\mathcal{X}$. For some space $\mathcal{V}$ we have $\mathcal{G} = \{\mathbf{Y} \in \mathscr{C}_n | \ker \mathbf{Y} \supseteq \mathcal{V}\}$ so that $\ker \mathbf{X} \supseteq \mathcal{V}$. But then for every $\mathbf{Y} \in \mathcal{F}$, $\ker \mathbf{Y} \supseteq \ker \mathbf{X} \supseteq \mathcal{V}$, so $\mathbf{Y} \in \mathcal{G}$. Hence $\mathcal{F}_\mathbf{X} \subseteq \mathcal{G}$. 
\end{proof}

\begin{Lemma} 
\label{lem:riofsmallestface}
If $\mathbf{X} \in \mathscr{C}_n$ then it belongs to the relative interior of $\mathcal{F}_\mathbf{X}$ where $\mathcal{F}_\mathbf{X}$ is the smallest face of $\mathscr{C}_n$ containing $\mathbf{X}$. 
\end{Lemma}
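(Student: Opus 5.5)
By Lemma~\ref{lem:uniqueface}, $\mathbf{X}$ lies in the relative interior of a unique face $\mathcal{G}$ of $\mathscr{C}_n$. The plan is to show that $\mathcal{G} = \mathcal{F}_\mathbf{X}$. Lemma~\ref{lem:smallestface} already gives $\mathcal{F}_\mathbf{X} \subseteq \mathcal{G}$, since $\mathcal{G}$ is a face containing $\mathbf{X}$. So the whole proof comes down to the reverse inclusion $\mathcal{G} \subseteq \mathcal{F}_\mathbf{X}$. Concretely, we must show that every $\mathbf{Y} \in \mathcal{G}$ satisfies $\ker \mathbf{Y} \supseteq \ker \mathbf{X}$.

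Fix $\mathbf{Y} \in \mathcal{G}$. Because $\mathbf{X} \in \text{ri}(\mathcal{G})$, the segment from $\mathbf{Y}$ through $\mathbf{X}$ can be extended slightly past $\mathbf{X}$ while staying in $\mathcal{G}$. That is, for some small $t>0$,
\[
\mathbf{Z} = (1+t)\mathbf{X} - t\mathbf{Y} \in \mathcal{G} \subseteq \mathscr{C}_n .
\]
Rearranging gives $\mathbf{X} = \frac{1}{1+t}\mathbf{Z} + \frac{t}{1+t}\mathbf{Y}$, so $\mathbf{X}$ is a strict convex combination of the two psd matrices $\mathbf{Y}$ and $\mathbf{Z}$.

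Now take $\mathbf{v} \in \ker \mathbf{X}$. Then
\[
0 = \mathbf{v}^T\mathbf{X}\mathbf{v} = \tfrac{1}{1+t}\,\mathbf{v}^T\mathbf{Z}\mathbf{v} + \tfrac{t}{1+t}\,\mathbf{v}^T\mathbf{Y}\mathbf{v},
\]
and both terms on the right are nonnegative, so $\mathbf{v}^T\mathbf{Y}\mathbf{v} = 0$. For a psd matrix $\mathbf{Y}$ this means $\|\mathbf{Y}^{1/2}\mathbf{v}\|^2 = 0$, hence $\mathbf{Y}\mathbf{v} = \mathbf{0}$. Therefore $\ker \mathbf{Y} \supseteq \ker \mathbf{X}$, so $\mathbf{Y} \in \mathcal{F}_\mathbf{X}$, which proves $\mathcal{G} \subseteq \mathcal{F}_\mathbf{X}$.

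Combining the two inclusions gives $\mathcal{G} = \mathcal{F}_\mathbf{X}$, and hence $\mathbf{X} \in \text{ri}(\mathcal{F}_\mathbf{X})$.

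The step that needs care is the extension past $\mathbf{X}$, which relies on the standard characterization of relative interior points (Rockafellar, Theorem 6.4). The case $\mathbf{Y} = \mathbf{X}$ is trivial. An alternative route would be to show directly that $\mathbf{X} \in \text{ri}(\mathcal{F}_\mathbf{X})$ by perturbing within the range space of $\mathbf{X}$. I find the face-identification argument above cleaner, because it uses only Lemmas~\ref{lem:uniqueface} and~\ref{lem:smallestface}.
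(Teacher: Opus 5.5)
Your proof is correct and has the same structure as the paper's. You take the unique face $\mathcal{G}$ with $\mathbf{X}\in\text{ri}(\mathcal{G})$ from Lemma~\ref{lem:uniqueface}, get $\mathcal{F}_\mathbf{X}\subseteq\mathcal{G}$ from Lemma~\ref{lem:smallestface}, and prove $\mathcal{G}\subseteq\mathcal{F}_\mathbf{X}$ by extending the segment from $\mathbf{Y}$ slightly past $\mathbf{X}$ inside $\mathcal{G}$. The only difference is the final step: the paper argues by contradiction using that $\mathcal{F}_\mathbf{X}$ is itself a face (so $\mathbf{X}\pm t(\mathbf{Y}-\mathbf{X})$, with midpoint $\mathbf{X}$, must both lie in it), while you verify $\ker\mathbf{Y}\supseteq\ker\mathbf{X}$ directly from positive semidefiniteness of $\mathbf{Y}$ and $\mathbf{Z}$, which makes that step self-contained.
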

\begin{proof}
    By Lemma \ref{lem:uniqueface}, $\mathbf{X}$ is in the relative interior of a unique face $\mathcal{H}$. By Lemma \ref{lem:smallestface}, the smallest face containing $\mathbf{X}$ is $\mathcal{F}_\mathbf{X} = \mathcal{F}_{\mathbf{X}} = \{\mathbf{Y} \in \mathscr{C}_n | \ker \mathbf{Y} \supseteq \ker \mathbf{X}\}$. 
    Suppose $\mathbf{Y} \in \mathcal{H}$ but $\mathbf{Y} \notin \mathcal{F}_\mathbf{X}$. We have for all $t\in (0,1)$ that $t\mathbf{Y} + (1-t)\mathbf{X} \notin \mathcal{F}_\mathbf{X}$. Otherwise, we would violate the definition of a face, since $\mathbf{X},\mathbf{Y} \in \mathscr{C}_n$ and $ t\mathbf{Y} + (1-t)\mathbf{X} \notin \mathcal{F}_\mathbf{X}$ but $\mathbf{Y} \notin \mathcal{F}_\mathbf{X}$. Since $\mathbf{Y} \in \mathcal{H}$ and $\mathbf{X} \in \text{ri}(\mathcal{H})$, there exists $t>0$ such that $\mathbf{X} \pm t(\mathbf{Y}-\mathbf{X}) \in \mathcal{H}$. Since $$ \frac{1}2 [\mathbf{X}+t(\mathbf{Y}-\mathbf{X})] + \left(1-\frac{1}2\right)[\mathbf{X} - t(\mathbf{X} - \mathbf{Y})] = \mathbf{X} \in \mathcal{F}_\mathbf{X},$$ it follows again by the definition of a face that $\mathbf{X} \pm t(\mathbf{Y} - \mathbf{X}) \in \mathcal{F}_\mathbf{X}$. But this is a contradiction since $\mathbf{X} + t(\mathbf{Y}-\mathbf{X}) = t\mathbf{Y} + (1-t)\mathbf{X} \notin \mathcal{F}_\mathbf{X}$. Therefore $\mathbf{Y} \in \mathcal{H}$ implies $\mathbf{Y} \in \mathcal{F}_\mathbf{X}$. Hence $\mathcal{H} \subseteq \mathcal{F}_X$. But since $\mathcal{F}_\mathbf{X}$ is the smallest face, we must have $\mathcal{H} = \mathcal{F}_X$. So $\mathbf{X} \in \text{ri} (\mathcal{F}_X)$. 
    
\end{proof}

\begin{Lemma}
\label{lem:expressionforri}
If the relative interior of the face $\mathcal{V}_\mathbf{B}$ is nonempty, it can be expressed as $$\text{ri}(\mathcal{V}_\mathbf{B}) = \{\mathbf{R} \in \mathscr{C}_n | \ker \mathbf{R} = \ker \mathbf{R}_0\}$$ for some $\mathbf{R}_0 \in \text{ri}(\mathcal{V}_\mathbf{B})$. 
    
\end{Lemma}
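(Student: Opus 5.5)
We would prove the statement through the characterization of the smallest face given in Lemma~\ref{lem:smallestface}, combined with Lemmas~\ref{lem:uniqueface} and~\ref{lem:riofsmallestface}.

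\emph{Choice of $\mathbf{R}_0$ and the identification of faces.} Since $\text{ri}(\mathcal{V}_\mathbf{B})$ is assumed nonempty, pick any $\mathbf{R}_0 \in \text{ri}(\mathcal{V}_\mathbf{B})$. By Lemma~\ref{lem:riofsmallestface}, $\mathbf{R}_0$ lies in the relative interior of its smallest face $\mathcal{F}_{\mathbf{R}_0}$. Since $\mathcal{V}_\mathbf{B}$ is a face of $\mathscr{C}_n$ (Theorem 2.7 of \cite{Laurent1996OnTF}), Lemma~\ref{lem:uniqueface} says $\mathbf{R}_0$ lies in the relative interior of exactly one face. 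Therefore $\mathcal{V}_\mathbf{B} = \mathcal{F}_{\mathbf{R}_0}$. Lemma~\ref{lem:smallestface} then gives
\[
\mathcal{V}_\mathbf{B} = \{\mathbf{Y}\in\mathscr{C}_n \mid \ker\mathbf{Y}\supseteq\ker\mathbf{R}_0\}.
\]

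\emph{Inclusion $\supseteq$.} Let $\mathbf{R}\in\mathscr{C}_n$ with $\ker\mathbf{R}=\ker\mathbf{R}_0$. Then $\mathbf{R}\in\mathcal{V}_\mathbf{B}$, and its smallest face is
\[
\mathcal{F}_\mathbf{R}=\{\mathbf{Y}\mid\ker\mathbf{Y}\supseteq\ker\mathbf{R}\}=\mathcal{V}_\mathbf{B}.
\]
By Lemma~\ref{lem:riofsmallestface}, $\mathbf{R}\in\text{ri}(\mathcal{F}_\mathbf{R})=\text{ri}(\mathcal{V}_\mathbf{B})$.

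\emph{Inclusion $\subseteq$.} Let $\mathbf{R}\in\text{ri}(\mathcal{V}_\mathbf{B})$. The argument of the first step, applied to $\mathbf{R}$ instead of $\mathbf{R}_0$, gives $\mathcal{F}_\mathbf{R}=\mathcal{V}_\mathbf{B}=\mathcal{F}_{\mathbf{R}_0}$. So
\[
\{\mathbf{Y}\mid\ker\mathbf{Y}\supseteq\ker\mathbf{R}\}=\{\mathbf{Y}\mid\ker\mathbf{Y}\supseteq\ker\mathbf{R}_0\}.
\]
Since $\mathbf{R}_0$ belongs to the right-hand set, it belongs to the left-hand set, so $\ker\mathbf{R}_0\supseteq\ker\mathbf{R}$. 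Symmetrically, $\ker\mathbf{R}\supseteq\ker\mathbf{R}_0$. Hence $\ker\mathbf{R}=\ker\mathbf{R}_0$.

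\emph{Main obstacle.} The delicate point is the identification $\mathcal{V}_\mathbf{B}=\mathcal{F}_{\mathbf{R}_0}$. It depends on the uniqueness in Lemma~\ref{lem:uniqueface}, and that uniqueness applies only because $\mathcal{V}_\mathbf{B}$ is genuinely a face of $\mathscr{C}_n$. We therefore cite the face characterization of \cite{Laurent1996OnTF} explicitly at that step. Once that identification is in place, the remaining steps are formal.
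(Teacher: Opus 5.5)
Your proof is correct and follows essentially the same route as the paper's: fix $\mathbf{R}_0 \in \text{ri}(\mathcal{V}_\mathbf{B})$, identify $\mathcal{V}_\mathbf{B}$ with the smallest face $\mathcal{F}_{\mathbf{R}_0}$ using the uniqueness in Lemma~\ref{lem:uniqueface}, and then use the kernel description of Lemma~\ref{lem:smallestface} for each of the two inclusions. The only difference is that you make the identification $\mathcal{V}_\mathbf{B} = \mathcal{F}_{\mathbf{R}_0}$ an explicit first step and note that it requires $\mathcal{V}_\mathbf{B}$ to be a face, whereas the paper uses this silently in the equality $\text{ri}(\mathcal{F}_{\mathbf{R}_0}) = \text{ri}(\mathcal{V}_\mathbf{B})$.
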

\begin{proof}
    For the entirety of the proof, fix any $\mathbf{R}_0 \in \text{ri}(\mathcal{V}_\mathbf{B}).$ First, suppose $\mathbf{R}\in \mathscr{C}_n$ and $\ker \mathbf{R} = \ker \mathbf{R}_0$. 
By Lemma \ref{lem:uniqueface}, $\mathbf{R}$ lies in the relative interior of a unique face $\mathcal{F}_\mathbf{R}$, which by 
Lemmas \ref{lem:smallestface} and  \ref{lem:riofsmallestface}, can be written as $$\mathcal{F}_\mathbf{R} = \{\mathbf{Y} \in \mathscr{C}_n | \ker \mathbf{Y} \supseteq \ker \mathbf{R}\}.$$ Clearly $\mathcal{F}_\mathbf{R} = \mathcal{F}_\mathbf{R_0}$ since the kernels are equal. But then by Lemma \ref{lem:uniqueface}, $$\mathbf{R} \in \text{ri}(\mathcal{F}_\mathbf{R}) = \text{ri}(\mathcal{F}_{\mathbf{R}_0}) = \text{ri}(\mathcal{V}_\mathbf{B}).$$
Now on the other hand, suppose $\mathbf{R} \in \text{ri}(\mathcal{V}_\mathbf{B})$. Since both $\mathbf{R}$ and $\mathbf{R}_0$ belong to the relative interior of a unique face, we have $\mathcal{V}_\mathbf{B} = \mathcal{F}_\mathbf{R} = \mathcal{F}_{\mathbf{R}_0}$. So $$ \mathcal{V}_\mathbf{B} = \{\mathbf{Y} \in \mathscr{C}_n | \ker \mathbf{Y} \supseteq \ker \mathbf{R}\}=\{\mathbf{Y} \in \mathscr{C}_n | \ker \mathbf{Y} \supseteq \ker \mathbf{R}_0\}.$$ We have $\mathbf{R}, \mathbf{R}_0 \in \mathcal{V}_\mathbf{B}$ so $\ker \mathbf{R} \subseteq \ker \mathbf{R}_0$ and $\ker \mathbf{R}_0 \subseteq \ker \mathbf{R}$. Thus $\ker \mathbf{R} = \ker \mathbf{R}_0$.
\end{proof}
\vskip 5pt

\noindent{\bf Proof of Theorem~\ref{thm:maxrank}}
\begin{proof}
Let $\mathbf{R} \in \mathcal{V}_\mathbf{B}$ and suppose $\mathbf{R}$ has maximal rank. Choose $\mathbf{R}_0 \in 
\text{ri}(\mathcal{V}_\mathbf{B})$. We have $\text{rank}(\mathbf{R})\ge \text{rank}(\mathbf{R_0})$. Since $\mathcal{V}_\mathbf{B}$ is the smallest face containing $\mathbf{R}_0$ we have by Lemma $\ref{lem:smallestface}$ that $\ker \mathbf{R} \supseteq \ker \mathbf{R}_0$. By the rank-nullity theorem, $n  - \text{rank}(\mathbf{R}) \ge n  - \text{rank}(\mathbf{R}_0)$, or rearranging, $\text{rank}(\mathbf{R}_0) \ge \text{rank}(\mathbf{R})$. Thus the ranks of $\mathbf{R}$ and $\mathbf{R}_0$ are equal, and consequently their nullities are also equal. But since  $\ker \mathbf{R} \supseteq \ker \mathbf{R}_0$, we see that $\ker \mathbf{R} = \ker \mathbf{R}_0$ since a vector space can't strictly contain a subspace of equal dimension. By Lemma \ref{lem:expressionforri}, $\mathbf{R} \in \text{ri}(\mathcal{V}_\mathbf{B})$. 

For the other direction, suppose $\mathbf{R} \in \text{ri}(\mathcal{V}_\mathbf{B})$ and let $\mathbf{Y} \in \mathcal{V}_\mathbf{B}$. Similar to the above, we have that $\ker{\mathbf{Y}} \supseteq \ker{\mathbf{R}}$. Hence $n-\text{rank}(\mathbf{Y}) \ge n-\text{rank}(\mathbf{R})$ and so $\text{rank}(\mathbf{R}) \ge \text{rank}(\mathbf{Y})$. 
Suppose, $\epsilon>0$ be given. $\mathbf{R}$ belongs to the face $\mathcal{F} = \mathcal{V}_\mathbf{B}$ and is either in the relative interior or the relative boundary. If $\mathbf{R} \in \text{ri}(\mathcal{V}_\mathbf{B})$ then it has maximal rank  by Theorem \ref{thm:maxrank} and we can set $\mathbf{U} = \mathbf{0}$. Otherwise, let $\mathbf{R} \in \text{rb}(\mathcal{V}_\mathbf{B})$. Nonempty convex sets have nonempty relative interiors, so choose a $\mathbf{R}_0 \in \text{ri}(\mathcal{F})$. By Theorem 6.1 of \cite{rockafellar-1970a} we have $$(1-\lambda)\mathbf{R} + \lambda\mathbf{R}_0 = \mathbf{R} + \lambda(\mathbf{R} - \mathbf{R}_0) \in \text{ri}(\mathcal{F}).$$ Set $\lambda = \frac{\epsilon}{2||\mathbf{R}-\mathbf{R}_0||}$ and $\mathbf{U} = \lambda (\mathbf{R} - \mathbf{R}_0).$ Then $$||\mathbf{U}|| = \frac{1}2 \epsilon< \epsilon.$$
\end{proof}

\noindent{\bf Proof of Proposition~\ref{prop:lomax_violation}}

\begin{proof}
The proof follows directly from the proof of Proposition~1 in \cite{arendarczyk2018joint} and straightforward calculation. 
\end{proof}

\bibliographystyle{plainnat}
\bibliography{bibliography.bib}

@article{DwMcNiSm2016,
author = {C. Dwork and F. McSherry and K. Nissim and A. Smith},
title  = {Calibrating noise to sensitivity in private data analysis},
journal  = { Journal of Privacy and Confidentiality},
volume = {7},
pages = {17--51},
year  = {2016}
}

@article{dwork2014algorithmic,
  title={The Algorithmic Foundations of Differential Privacy},
  author={Dwork, Cynthia and Roth, Aaron},
  journal={Foundations and Trends in Theoretical Computer Science},
  volume={9},
  number={3--4},
  pages={211--407},
  year={2014},
  publisher={Now Publishers},
  doi={10.1561/0400000042},
  url={https://www.cis.upenn.edu/~aaroth/Papers/privacybook.pdf}
}

@article{Higham2002a,
title = {Computing the nearest correlation matrix- a problem from finance},
author = {Higham, Nicholas.J.},
journal = {IMA Journal of Numerical Analysis},
volume = {22},
pages = {329--343},
year = {2002}
}

@conference{Higham1989,
  author       = {Nicholas J. Higham}, 
  title        = {Matrix nearness problems and applications},
  booktitle    = {Applications of Matrix Theory},
  year         = {1989},
  editor       = {M. J. C. Gover and S. Barnett},
  pages        = {1--27},
  publisher    = {Oxford University Press},
}

@book{Higham2002b,
  author    = {Nicholas J. Higham}, 
  title     = {Accuracy and Stability of Numerical Algorithms},
  publisher = {Society for Industrial and Applied Mathematics},
  year      = 2002,
  address   = {Philadelphia, PA, USA},
  edition   = 2,
  isbn      = {0-89871-521-0}
}

@article{Higham2016,
  author  = {Nicholas J. Higham and Natasa Strabi\'{c} }, 
  title   = {Anderson acceleration of the alternating
projections method for computing the nearest correlation matrix},
  journal = {Numer. Algorithms},
  year    = 2016,
  number  = 4,
  pages   = {1021--1042},
  volume  = 72
}

@article{Delorme1993,
  author  = {C. Delorme and S. Poljak}, 
  title   = {Combinatorial properties and the complexity of a max-cut approximation},
  journal = {European J. Combin.},
  year    = 1993,
  pages   = {313--333},
  volume  = 14
}

@article{Barrett2003,
  author  = {Wayne Barrett and Stephen Pierc\'{e}}, 
  title   = {Null spaces of correlation matrices},
  journal = {Linear Algebra and its Applications},
  year    = 2003,
  pages   = {129--157},
  volume  = 368
}

@article{Laurent1996OnTF,
  title={On the Facial Structure of the Set of Correlation Matrices},
  author={Monique Laurent and Svatopluk Poljak},
  journal={SIAM J. Matrix Anal. Appl.},
  year={1996},
  volume={17},
  pages={530-547},
  url={https://api.semanticscholar.org/CorpusID:14341215}
}

@article{bauschke1997method,
  title={The method of cyclic projections for closed convex sets in Hilbert space},
  author={Bauschke, Heinz H and Borwein, Jonathan M and Lewis, Adrian S},
  journal={Contemporary Mathematics},
  volume={204},
  pages={1--38},
  year={1997},
  publisher={Providence, RI: American Mathematical Society}
}

@article{Cong2017,
author = {Cong, Y. and Chen, B. and Zhou, M.}, title = {Fast simulation of hyperplane-truncated multivariate normal distributions}, 
journal = {Bayesian Analysis}, 
volume = {12},
pages = {1017--1037},
year = {2017}
}

@article{BoyleDykstra1986,
author = {Boyle, J. P. and Dykstra, R. L.}, 
title = {A Method for Finding Projections Onto the Intersection of Convex Sets in Hilbert Spaces}, 
journal = {Advances in Order Restricted Statistical Inference}, 
editor = {R. Dykstra et al. (eds.)},
publisher = {Springer-Verlag Berlin Heidelberg},
pages = {28--47},
year = {1986}
}

@article{Maatouk2022,
  author  = {Hassan Maatouk, Didier Rulli\'{e}re, Xavier Bay}, 
  title   = {Sampling large hyperplane-truncated multivariate
normal distributions},
journal = {Statistics \& Probability Letters},
year = {2022}
}

@article{Hoffman1991,
author = {Y. Hoffman and E. Ribak.}, 
title = {Constrained realizations of Gaussian fields:
A simple algorithm}, 
journal = {The Astrophysical Journal}, 
volume = {380},
pages = {L5},
year = {1991}
}

@book{Journel1991,
author = {A.G. Journel and C.J. Huijbregts}, 
title = {Mining geostatistics}, 
publisher = {Academic Press}, 
year = {1976}
}

@article{Geweke1998,
author = {Geweke, John},
year = {1998},
month = {03},
pages = {},
title = {Efficient Simulation from the Multivariate Normal and Student-t Distributions Subject to Linear Constraints and the Evaluation of Constraint Probabilities},
volume = {23},
journal = {Comput. Sci. Statist.}
}

@article{censor,
author = {Y Censor and M. Zaknoon},
year = {2018},
month = {Feb},
pages = {},
title = {Algorithms and Convergence Results
of Projection Methods for
Inconsistent Feasibility Problems: A
Review}
}

@article{Abowd20222020,
	author = {Abowd, John and Ashmead, Robert and Cumings-Menon, Ryan and Garfinkel, Simson and Heineck, Micah and Heiss, Christine and Johns, Robert and Kifer, Daniel and Leclerc, Philip and Machanavajjhala, Ashwin and Moran, Brett and Sexton, William and Spence, Matthew and Zhuravlev, Pavel},
	journal = {Harvard Data Science Review},
	number = {Special Issue 2},
	year = {2022},
	title = {The 2020 {Census} {Disclosure} {Avoidance} {System} {TopDown} {Algorithm}},
	volume = { },
}

@book{rockafellar-1970a,
  address = {Princeton, N. J.},
  author = {Rockafellar, R. Tyrrell},
  callnumber = {UniM Maths 516.08 R59},
  description = {robotica-bib},
  notes = {A SRL reference.},
  publisher = {Princeton University Press},
  series = {Princeton Mathematical Series},
  title = {Convex analysis},
  year = 1970
}

@article{kasiviswanathan2011can,
  title={What can we learn privately?},
  author={Kasiviswanathan, Shiva Prasad and Lee, Homin K and Nissim, Kobbi and Raskhodnikova, Sofya and Smith, Adam},
  journal={SIAM Journal on Computing},
  volume={40},
  number={3},
  pages={793--826},
  year={2011},
  publisher={SIAM}
}

@article{joe1993parametric,
  title={Parametric families of multivariate distributions with given margins},
  author={Joe, Harry},
  journal={Journal of multivariate analysis},
  volume={46},
  number={2},
  pages={262--282},
  year={1993},
  publisher={Elsevier}
}

@article{koehler1995constructing,
  title={Constructing multivariate distributions with specific marginal distributions},
  author={Koehler, Kenneth J and Symanowski, James T},
  journal={Journal of multivariate analysis},
  volume={55},
  number={2},
  pages={261--282},
  year={1995},
  publisher={Elsevier}
}

@article{dukic2013minimum,
  title={Minimum correlation in construction of multivariate distributions},
  author={Dukic, Vanja M and Mari{\'c}, Nevena},
  journal={Physical Review E—Statistical, Nonlinear, and Soft Matter Physics},
  volume={87},
  number={3},
  pages={032114},
  year={2013},
  publisher={APS}
}

@article{huber2015multivariate,
  title={Multivariate distributions with fixed marginals and correlations},
  author={Huber, Mark and Mari{\'c}, Nevena},
  journal={Journal of Applied Probability},
  volume={52},
  number={2},
  pages={602--608},
  year={2015},
  publisher={Cambridge University Press}
}

@inproceedings{pichler2014entropy,
  title={Entropy for singular distributions},
  author={Pichler, Georg and Koliander, G{\"u}nther and Riegler, Erwin and Hlawatsch, Franz},
  booktitle={2014 IEEE International Symposium on Information Theory},
  pages={2484--2488},
  year={2014},
  organization={IEEE}
}

@book{von1951functional,
  title={Functional operators: The geometry of orthogonal spaces},
  author={Von Neumann, John},
  number={22},
  year={1951},
  publisher={Princeton University Press}
}

@article{von1949rings,
  title={On rings of operators. Reduction theory},
  author={Von Neumann, John},
  journal={Annals of Mathematics},
  volume={50},
  number={2},
  pages={401--485},
  year={1949},
  publisher={JSTOR}
}

@article{calamai1987projected,
  title={Projected gradient methods for linearly constrained problems},
  author={Calamai, Paul H and Mor{\'e}, Jorge J},
  journal={Mathematical programming},
  volume={39},
  number={1},
  pages={93--116},
  year={1987},
  publisher={Springer}
}

@book{polyak2021introduction,
  title={Introduction to continuous optimization},
  author={Polyak, Roman A and others},
  volume={172},
  year={2021},
  publisher={Springer}
}

@book{ruszczynski2011nonlinear,
  title={Nonlinear optimization},
  author={Ruszczynski, Andrzej},
  year={2011},
  publisher={Princeton university press}
}

@article{arendarczyk2018joint,
  title={The joint distribution of the sum and maximum of dependent Pareto risks},
  author={Arendarczyk, Marek and Kozubowski, Tomasz J and Panorska, Anna K},
  journal={Journal of Multivariate Analysis},
  volume={167},
  pages={136--156},
  year={2018}
}

@article{morrison1965some,
  title={Some statistical characteristics of a peak to average ratio},
  author={M. Morrison and F. Tobias},
  journal={Technometrics},
  volume={7},
  number={3},
  pages={379--385},
  year={1965},
  publisher={Taylor \& Francis}
}

@article{arov1960extreme,
  title={The extreme terms of a sample and their role in the sum of independent variables},
  author={D. Z. Arov and A. A. Bobrov},
  journal={Theory of Probability \& Its Applications},
  volume={5},
  number={4},
  pages={377--396},
  year={1960},
  publisher={SIAM}
}

@article{balakrishnan2013scale,
  title={Scale mixtures of Kotz--Dirichlet distributions},
  author={M. Balakrishnan and E. Hashorva},
  journal={Journal of Multivariate Analysis},
  volume={113},
  pages={48--58},
  year={2013},
  publisher={Elsevier}
}

@article{breiman1965some,
  title={On some limit theorems similar to the arc-sin law},
  author={L. Breiman},
  journal={Theory of Probability \& Its Applications},
  volume={10},
  number={2},
  pages={323--331},
  year={1965},
  publisher={SIAM}
}

@article{darling1952influence,
  title={The influence of the maximum term in the addition of independent random variables},
  author={D. A. Darling},
  journal={Transactions of the American Mathematical Society},
  volume={73},
  number={1},
  pages={95--107},
  year={1952},
  publisher={JSTOR}
}

@article{haas1992maximum,
  title={The maximum and mean of a random length sequence},
  author={P. J. Haas},
  journal={Journal of applied probability},
  volume={29},
  number={2},
  pages={460--466},
  year={1992},
  publisher={Cambridge University Press}
}

@incollection{kozubowski2010distributions,
  title={The distributions of the peak to average and peak to sum ratios under exponentiality},
  author={T. M. Kozubowski and A. K. Panorska and F. Qeadan},
  booktitle={Advances in Directional and Linear Statistics: A Festschrift for Sreenivasa Rao Jammalamadaka},
  pages={131--142},
  year={2010},
  publisher={Springer}
}

@article{qeadan2012joint,
  title={The joint distribution of the sum and the maximum of IID exponential random variables},
  author={F. Qeadan and T. M. Kozubowski and A. K. Panorska},
  journal={Communications in Statistics-Theory and Methods},
  volume={41},
  number={3},
  pages={544--569},
  year={2012},
  publisher={Taylor \& Francis}
}

@article{Bailie2026Refreshment,
	author = {Bailie, James and Gong, Ruobin and Meng, Xiao-Li},
	journal = {Harvard Data Science Review},
	number = {Special Issue 6},
	year = {2026},
	month = {feb 24},
	note = {https://hdsr.mitpress.mit.edu/pub/a1bpffpz},
	publisher = {The MIT Press},
	title = {A {Refreshment} {Stirred}, {Not} {Shaken}: Invariant-{Preserving} {Deployments} of {Differential} {Privacy} for the {U}.{S}. {Decennial} {Census}},
	volume = { },
}

\end{document}